\begin{document}
\theoremstyle{plain}
\newtheorem{theorem}{Theorem}[section]
\newtheorem{lemma}[theorem]{Lemma}
\newtheorem{result}[theorem]{Result}
\newtheorem{hw}[theorem]{Homework}
\newtheorem{corollary}[theorem]{Corollary}
\newtheorem{proposition}[theorem]{Proposition}
\newtheorem{example}[theorem]{Example}
\newtheorem{examples}[theorem]{Examples}
\newtheorem{conjecture}[theorem]{Conjecture}
\theoremstyle{definition}
\newtheorem{notations}[theorem]{Notations}
\newtheorem{notation}[theorem]{Notation}
\newtheorem{remark}[theorem]{Remark}
\newtheorem{remarks}[theorem]{Remarks}
\newtheorem{definition}[theorem]{Definition}
\newtheorem{claim}[theorem]{Claim}
\newtheorem{assumption}[theorem]{Assumption}
\newtheorem{question}[theorem]{Question}
\numberwithin{equation}{section}
\newtheorem{construction}[theorem]{Construction}
\newtheorem{examplerm}[theorem]{Example}
\newtheorem{propositionrm}[theorem]{Proposition}
\newcommand{\binomial}[2]{\left(\begin{array}{c}#1\\#2\end{array}\right)}
\newcommand{\zar}{{\rm zar}}
\newcommand{\an}{{\rm an}}
\newcommand{\red}{{\rm red}}
\newcommand{\codim}{{\rm codim}}
\newcommand{\rank}{{\rm rank}}
\newcommand{\Pic}{{\rm Pic}}
\newcommand{\Div}{{\rm Div}}
\newcommand{\Hom}{{\rm Hom}}
\newcommand{\Ima}{{\rm Im}}
\newcommand{\Ker}{{\rm Ker \ }}
\newcommand{\Spec}{{\rm Spec}}
\newcommand{\sing}{{\rm sing}}
\newcommand{\reg}{{\rm reg}}
\newcommand{\Char}{{\rm char\ }}
\newcommand{\Tr}{{\rm Tr}}
\newcommand{\Norm}{{\rm Norm}}
\newcommand{\tr}{{\rm tr}}
\newcommand{\supp}{{\rm supp \ }}
\newcommand{\Gal}{{\rm Gal}}
\newcommand{\Span}{{\rm Span}}
\newcommand{\Min}{{\rm Min \ }}
\newcommand{\Max}{{\rm Max \ }}

\newenvironment{pf}{\noindent\textbf{Proof.}\quad}{\hfill{$\Box$}}
\newenvironment{dpf}{\noindent\textbf{Disproof.}\quad}{\hfill{$\Box$}}

\newcommand{\sC}{{\mathcal C}}
\newcommand{\sD}{{\mathcal D}}
\newcommand{\sK}{{\mathcal K}}
\newcommand{\sL}{{\mathcal L}}
\newcommand{\sR}{{\mathcal R}}
\newcommand{\sS}{{\mathcal S}}

\newcommand{\C}{{\mathbb C}}
\newcommand{\F}{{\mathbb F}}
\newcommand{\Q}{{\mathbb Q}}
\newcommand{\R}{{\mathbb R}}
\newcommand{\Z}{{\mathbb Z}}

\newcommand{\be}{\begin{eqnarray}}
\newcommand{\ee}{\end{eqnarray}}
\newcommand{\nn}{{\nonumber}}
\newcommand{\dd}{\displaystyle}
\newcommand{\ra}{\rightarrow}

\newcommand{\Keywords}[1]{\par\noindent
{\small{\bf Keywords\/}: #1}}

\title[Additive cyclic codes over finite commutative chain rings]{Additive cyclic codes \\ over finite commutative chain rings}
\author[Edgar Mart\'inez-Moro, Kamil Otal and Ferruh \"{O}zbudak]{Edgar Martinez-Moro, Kam\.{I}l Otal and Ferruh \"{O}zbudak}
\thanks{Edgar Mart\'inez-Moro is with the Mathematics Research Institute, University of Valladolid, Castilla, Spain. Partially supported by MINECO MTM2015-65764-C3-1-P project and MTM2015-69138-REDT Network}
\thanks{Kamil Otal and Ferruh \"{O}zbudak are with Department of Mathematics and Institute of Applied Mathematics, Middle East Technical
University,  Dumlup{\i}nar Bulvar{\i} No. 1, 06800, Ankara, Turkey;
e-mail: \{kotal,ozbudak\}@metu.edu.tr}

\abstract

Additive cyclic codes over Galois rings were investigated in \cite{CGFC2015}. In this paper, we investigate the same problem but over a more general ring family, finite commutative chain rings. When we focus on non-Galois finite commutative chain rings, we observe two different kinds of additivity. One of them is a natural generalization of the study in \cite{CGFC2015}, whereas the other one has some unusual properties especially while constructing dual codes. We interpret the reasons of such properties and illustrate our results giving concrete examples.  
\vspace{0.7cm}
\Keywords{Cyclic codes, additive codes, codes over rings, finite commutative chain rings, Galois rings.}\\
\textbf{AMS classification:} {11T71, 94B99,  81P70, 13M10}
\endabstract
\maketitle
\section{Introduction}\label{sec:intro}

Additive codes are a direct and useful generalization of linear codes, and they have applications in quantum error correcting codes. There are several studies using different approaches on them and their applications (see, for example, \cite{Bie2007, CRSS1998, Huf2013, Rai1999}).  

Cyclic codes are one of the most attractive code families thank to their rich algebraic structure and easy implementation properties. There are many generalizations of cyclic codes in different directions, e.g., \cite{DL2004, LOOS2013, MPR2013, MR2006}.  

Codes over rings have been of interest in the last quarter after the discovery that some linear codes over $\Z_4$ are related to non-linear codes over finite fields (see, for example, \cite{CHKSS1993, CS1995, HKCSS1994, Nec1982, Nec1991}). The first family of the rings used in this perspective was $\Z_{p^n}$, where $p$ is a prime and $n$ is a positive integer. The most important property of such rings is the linearity of their ideals under inclusion. Therefore, the generalizations on Galois rings, or moreover finite chain rings are immediate. Some recent works on codes over such rings are \cite{CGFC2015, DL2004, HL2010, SM2016}. 

Finite chain rings, besides their practical importance, are quite rich mathematical objects and so they have also theoretical attraction. They have connections in both geometry (Pappian Hjelmslev planes) and algebraic number theory (quotient rings of algebraic integers). These connections have also been interpreted in applications (as an example of application in coding theory, see \cite{HL2010}). Some main sources about finite commutative chain rings in the literature are \cite{BF2002, McD1974, Nec2008, Wan2003}. 

\subsection{Related work and our contribution}

Additive cyclic codes over Galois rings were investigated in \cite{CGFC2015}. In this paper, we investigate the same problem but over a more general ring family, finite commutative chain rings. When we focus on non-Galois finite commutative chain rings, we observe two different kinds of additivity. 

The first one, so-called Galois-additivity, is a natural generalization of the study in \cite{CGFC2015}, anyway our way of construction in this generalization is slightly different from the one in that paper. The authors in \cite{CGFC2015} were using some linear codes over the base ring and their generator matrices, but we just make use of ideals and do not get involved generator matrices. Our main result with this approach is Theorem \ref{thm:main1}. Also we have some further results related to the code size relations (Corollary \ref{cor:card}) and self-duality (Corollary \ref{cor:selfadd}) as well as we illustrate these results in a concrete example (Example \ref{ex:Gal}).   

The second one, so-called Eisenstein-additivity, is set up in Theorem \ref{thm:main2}. Eisenstein-additivity has some unusual properties especially while constructing dual codes. It is because, some chain rings are not free over their coefficient rings and hence we can not define a trace function for some elements over the base ring (Lemma \ref{lem:free}). Hence one can not make use of the equivalence of Euclidean orthogonality and duality via the trace map (see \cite[Lemma 6]{SW2004}). Thus we use the general idea of duality, constructed via annihilators of characters (see \cite{Woo2008}). We have adapted this character theoretic duality notion to Eisenstein-additive codes (see Section \ref{sec:EisDual}) and hence we observe one-to-oneness between a code and its dual (a MacWilliams identity) as expected. We again provide a concrete example for our result regarding the character theoretic duality (see Example \ref{ex:Eis}). 

Notice that the idea in \cite{CGFC2015} has been generalized recently also in \cite{SM2016}. However, the generalization in \cite{SM2016} is from Galois rings to free $R$-algebras, where $R$ is a finite commutative chain ring. Recall that we consider also non-free algebras (a finite commutative chain ring does not have to be a free module over a subring of it). On the other hand, our paper does not cover \cite{SM2016} since every free $R$-algebra does not have to be a chain ring.

\subsection{Organization of the paper}

In Section \ref{sec:fccr}, we introduce finite commutative chain rings constructing them step by step, as from $\Z_{p^n}$ to Galois rings and then to arbitrary finite commutative chain rings. Section \ref{sec:codes} provides basic definitions, notations and facts of a code concept over rings, by focusing mainly on cyclic codes and additive codes. 

In Section \ref{sec:addGal}, firstly we give some lemmas and then construct the main theorem (Theorem \ref{thm:main1}) of Galois-additivity using them. We also give some corollaries regarding the code size relations between dual codes (Corollary \ref{cor:card}), and characterization of self-duality (Corollary \ref{cor:selfadd}). We have also provided an illustration (Example \ref{ex:Gal}) about our results.

In Section \ref{sec:addEis}, we directly give the main result (Theorem \ref{thm:main2}) about Eisenstein-additive codes since it comes from the lemmas in Section \ref{sec:addGal}. A direct character theoretic duality is constructed in Section \ref{sec:EisDual} separately. Also an example (Example \ref{ex:Eis}) is available to illustrate our results.

\section{Finite Commutative Chain Rings}\label{sec:fccr}

A ring $R$ is called \textit{local} if it has only one maximal ideal. A local ring $R$ is called a \textit{chain ring} if its ideals form a chain under inclusion. Saying ``finite" we will refer to having finitely many elements (not being finitely generated). A finite chain ring is a principal ideal ring, and its maximal ideal is its nil-radical (i.e. the set of nil-potent elements). Hence, the chain of ideals of a finite chain ring $R$ is of the form
$$R\supsetneq NilRad(R)=<x> \supsetneq ...\supsetneq <x^{m-1}>\supsetneq <x^m>=<0>$$
for some idempotent $x\in R$ and a positive integer $m$ (see, for example \cite[Theorem 3.2]{Nec2008}). The simplest example of finite chain rings is $\Z_{p^n}$ with the maximal ideal $<p>$, where $p$ is a prime and $n$ is a positive integer. We may construct all finite chain rings using $\Z_{p^n}$.

\subsection{Galois Rings} 

Consider the ring homomorphism $\Z_{p^n}\rightarrow \F_p$ given by $a\mapsto \overline{a}$, where $\overline{a}$ is the remainder of $a$ modulo $p$ and $\F_q$ denotes the finite field of $q$ elements. We may extend this homomorphism in a natural way from the polynomial ring $\Z_{p^n}[X]$ to the polynomial ring $\F_p[X]$ by $\overline{(\sum_{i}a_iX^i)}=\sum_{i}\overline{a_i}X^i$. A polynomial over $\Z_{p^n}$ is called \textit{basic irreducible (primitive)} if its image under this homomorphism is irreducible (primitive) over $\F_p$. 

Let $f(X)\in\Z_{p^n}[X]$ be a basic irreducible polynomial of degree $r$. The quotient ring $\Z_{p^n}[X]/<f(X)>$ is a finite chain ring with the maximal ideal $<p>$. This kind of rings are known as \textit{Galois rings} and denoted by $GR(p^n,r)$. In a Galois ring $GR(p^n,r)$, $p^n$ is called \textit{characteristic} and $r$ is called \textit{rank}. Galois rings are unique up to isomorphism for a given characteristic and rank. We will use the notation $\Z_{p^n}[\omega]$ to denote Galois rings, taking $\omega=X+<f(X)>$. Note that $f(X)$ is the unique monic polynomial of degree less than or equal to $r$ such that $f(\omega)=0$. We may extend the ``overline homomorphism" defined above as from $\Z_{p^n}[\omega]$ to $\F_{p^r}$ such that $\overline{\omega}$ satisfies $\overline{f}(\overline{\omega})=0$ and so $\F_{p^r}=\F_{p}[\overline{\omega}]$. Therefore, in a Galois ring $GR(p^n,r)$, there exists an element of multiplicative order $p^r-1$, which is a root of a basic primitive polynomial of degree $r$ over $\Z_{p^n}$ and dividing $X^{p^r-1}-1$ in $\Z_{p^n}[X]$. If $\omega$ is a this kind of ``basic primitive element", then the set 
$$
T=\{0,1,\omega,\omega^2,...,\omega^{p^r-2}\}
$$
is called a \textit{Teichmuller set} of the Galois ring $GR(p^n,r)$. Any element $a\in GR(p^n,r)$ can be written uniquely as
\begin{equation}
a=a_0+a_1p+...+a_{n-1}p^{n-1},
\label{eq:GRelt}
\end{equation}
where $a_0,...,a_{n-1}\in T$. Here, $a$ is unit if and only if $a_0\neq 0$. Conversely, if $a$ is not unit, then it is either zero or a zero divisor. 

When we compare two Galois rings having the same characteristic, we observe that, $GR(p^n,r')$ is an extension of $GR(p^n,r)$ if and only if $r$ divides $r'$. Let $r'=rs$ where $s$ is a positive integer, the extension $GR(p^n,rs)$ of $GR(p^n,r)$ can be constructed by a basic primitive polynomial $g(X)\in GR(p^n,r)[X]=\Z_{p^n}[\omega][X]$ of degree $s$ with $\overline{g}(X)\in \F_{p^r}[X]=\F_{p}(\overline{\omega})[X]$. Taking $\zeta=X+<g(X)>$ in $GR(p^n,r)[X]/<g(X)>$, we can see that $\zeta$ has multiplicative order $p^{rs}-1$, $\omega =\zeta^{\frac{p^{rs}-1}{p^r-1}}$ and $\F_{p^{rs}}=\F_p(\overline{\zeta})$. For the proofs of these facts and more detailed information, see, for example \cite{Wan2003}.

\subsection{Finite Commutative Chain Rings}

We have constructed all Galois rings from $\Z_{p^n}$, and now we construct all finite commutative chain rings from Galois rings. The following theorem gives a complete characterization of finite commutative chain rings. For the proof and more detail, see, for example \cite[Theorem XVII.5]{McD1974}. 

\begin{theorem}\label{thm:fccr}
Consider the ring
\begin{equation}
R=\frac{GR(p^n,r)[x]=\Z_{p^n}[\omega][x]}{<g(x)=x^k+p(a_{k-1}x^{k-1}+...+a_0),p^{n-1}x^t>}, 
\label{eq:fccr-form}
\end{equation}
where $g(x)$ is an Eisenstein polynomial (i.e., $a_0$ is a unit element), $1\leq t \leq k$ when $n\geq 2$, and $t=k$ when $n=1$. $R$ is a finite commutative chain ring, and conversely, any finite commutative chain ring is of the form (\ref{eq:fccr-form}). 
\end{theorem}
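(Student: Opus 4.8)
I would prove the two implications of Theorem~\ref{thm:fccr} separately.

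\textbf{The ring in~(\ref{eq:fccr-form}) is a finite commutative chain ring.} Let $R$ be as in~(\ref{eq:fccr-form}); it is plainly finite and commutative, so the point is that it is a chain ring. Write $\pi$ for the image of $x$ in $R$. The Eisenstein relation $g(\pi)=0$ reads $\pi^{k}=-p(a_{k-1}\pi^{k-1}+\dots+a_{1}\pi+a_{0})$, and isolating $p a_{0}$ together with the fact that $a_{0}$ is a unit shows $p\in\langle\pi\rangle$. From $\pi^{k}\in\langle p\rangle$ and $p^{n}=0$ we get that $\pi$ is nilpotent, so $\langle\pi\rangle$ is proper and $R/\langle\pi\rangle$ is a nonzero quotient of $GR(p^{n},r)/\langle p\rangle\cong\F_{p^{r}}$, hence equals $\F_{p^{r}}$. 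An element $a\notin\langle\pi\rangle$ then satisfies $ab\equiv1$ for some $b$, so $ab$ differs from $1$ by a nilpotent and is a unit, whence $a$ is a unit; therefore $R$ is local with principal, nilpotent maximal ideal $\mathfrak{m}=\langle\pi\rangle$. Finally, in a local ring with nilpotent principal maximal ideal every nonzero ideal $I$ equals $\mathfrak{m}^{i}$ with $i$ largest such that $I\subseteq\mathfrak{m}^{i}$: choosing $a\in I\setminus\mathfrak{m}^{i+1}$ and writing $a=\pi^{i}v$ with $v$ a unit gives $\pi^{i}\in I$. So the ideals of $R$ form one chain and $R$ is a finite chain ring.

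\textbf{Every finite commutative chain ring has the form~(\ref{eq:fccr-form}).} Let $R$ be a finite commutative chain ring, $\mathfrak{m}=\langle\pi\rangle$ its maximal ideal, $\F_{q}=R/\mathfrak{m}$ with $q=p^{r}$, $m$ the nilpotency index of $\mathfrak{m}$, and $\Char R=p^{n}$ (the case $n=1$ being the elementary $R\cong\F_{q}[x]/\langle x^{m}\rangle$, so assume $n\ge2$). First I would construct a coefficient subring: lift a generator of the cyclic group $\F_{q}^{\times}$ to an element $\omega\in R^{\times}$ of the same order $q-1$ (a Hensel/Teichm\"uller argument, using that the prime-to-$p$ part of the finite group $R^{\times}$ maps isomorphically onto $\F_{q}^{\times}$), and set $S=\Z_{p^{n}}[\omega]$; one checks $S\cong GR(p^{n},r)$. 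Since $pR$ is an ideal it equals $\mathfrak{m}^{k}$ for some $k\ge1$; then $p^{n-1}R=\mathfrak{m}^{(n-1)k}\neq0$ and $p^{n}R=\mathfrak{m}^{nk}=0$ force $m=(n-1)k+t$ with $1\le t\le k$. As $R/pR=R/\mathfrak{m}^{k}$ is an $\F_{q}$-space of dimension $k$ spanned by the images of $1,\pi,\dots,\pi^{k-1}$, Nakayama gives $R=S\cdot1+S\pi+\dots+S\pi^{k-1}=S[\pi]$.

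Next I would extract the defining relations. Writing $\pi^{k}=\sum_{i=0}^{k-1}s_{i}\pi^{i}$ with $s_{i}\in S$ and reducing modulo the successive powers $\mathfrak{m},\mathfrak{m}^{2},\dots$ shows each $s_{i}$ lies in $pS$, say $s_{i}=-p a_{i}$; since $\pi^{k}$ generates $\mathfrak{m}^{k}$ and therefore avoids $\mathfrak{m}^{k+1}$, the coefficient $a_{0}$ must be a unit. Hence $g(x)=x^{k}+p(a_{k-1}x^{k-1}+\dots+a_{0})$ is Eisenstein with $g(\pi)=0$, while $p^{n-1}\pi^{t}$ generates $\mathfrak{m}^{(n-1)k+t}=\mathfrak{m}^{m}=0$, so $p^{n-1}\pi^{t}=0$. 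Thus $x\mapsto\pi$ gives a well-defined surjection $\overline{\phi}\colon R'=S[x]/\langle g(x),p^{n-1}x^{t}\rangle\to R$. By the first part $R'$ is a finite chain ring with residue field $\F_{q}$; in $R'$ the relation $g=0$ makes $p$ a unit times the $k$-th power of the generator of the maximal ideal $\mathfrak{m}'$ (the image of $x$), so $pR'=(\mathfrak{m}')^{k}$ and the relation $p^{n-1}x^{t}=0$ becomes $(\mathfrak{m}')^{m}=0$; hence $|R'|=q^{m'}\le q^{m}=|R|$ for the nilpotency index $m'$ of $\mathfrak{m}'$. Since $\overline{\phi}$ is onto we also have $|R'|\ge|R|$, so $\overline{\phi}$ is a bijection, hence a ring isomorphism.

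The reductions modulo powers of $\mathfrak{m}$ and the concluding cardinality bookkeeping are routine. I expect the one genuinely substantial step to be the construction of the Galois coefficient subring $S\cong GR(p^{n},r)$ inside an arbitrary finite commutative chain ring---the Teichm\"uller/Hensel lifting of $\F_{q}^{\times}$ and the verification that the subring it generates has characteristic $p^{n}$ and rank $r$---which is where the structure theory of finite local rings genuinely enters.
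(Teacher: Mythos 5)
The paper does not actually prove Theorem~\ref{thm:fccr}; it states the classification and refers the reader to \cite[Theorem XVII.5]{McD1974} (see also \cite{BF2002}), so there is no in-paper argument to compare against. Your proposal is, in substance, the standard structure-theoretic proof found in those references, and I find it sound. The forward direction is complete: $p\in\langle\pi\rangle$ from the Eisenstein relation, nilpotency of $\pi$, locality via ``unit modulo nilpotents'', and the observation that a finite local ring with principal nilpotent maximal ideal has only the ideals $\mathfrak{m}^i$. The converse is also correctly organized, and you are right to single out the existence of the coefficient subring $S\cong GR(p^n,r)$ as the one genuinely substantial imported fact: the Teichm\"uller lifting of $\F_q^{\times}$ through the $p$-group $1+\mathfrak{m}$, plus the verification (via Hensel factorization of $X^{q-1}-1$ and the fact that $\Z_{p^n}[\omega]$ is a quotient of $GR(p^n,r)$ of characteristic $p^n$) that this subring really is the Galois ring. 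Two smaller standard facts are used silently and deserve at least a mention if this were written out in full: that the maximal ideal of a finite chain ring is principal (take $\pi\in\mathfrak{m}\setminus\mathfrak{m}^2$, or use that the finitely many principal ideals inside $\mathfrak{m}$ form a chain), and that $|R|=q^{m}$ because each $\mathfrak{m}^i/\mathfrak{m}^{i+1}$ is a one-dimensional $\F_q$-space for $i<m$; the latter is what makes your concluding cardinality comparison $|R'|=q^{m'}\le q^{m}=|R|\le|R'|$ work. With those noted, the argument closes correctly: the surjection $S[x]/\langle g(x),p^{n-1}x^{t}\rangle\to R$ is forced to be an isomorphism by counting.
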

\begin{notation}\label{not:fccr}
Since now, we fix $p,n,r,k,t,g$ and $x,\omega$ as the corresponding parameters and elements respectively of a finite commutative chain ring given in Theorem \ref{thm:fccr}.   
\end{notation}

The following proposition gives some elementary facts about finite commutative chain rings. For the proofs and more detail, see, for example \cite{McD1974,Nec2008}.

\begin{proposition} Let $R$ denote the finite commutative chain ring with the parameters in Notation \ref{not:fccr}.
\begin{itemize}
	\item The chain of ideals of $R$ is of the form $$R\supsetneq <x> \supsetneq ...\supsetneq <x^{k(n-1)+t-1}>\supsetneq <x^{k(n-1)+t}>=0.$$ Moreover, $<x^k>=<p>$.
	\item $|R|=p^{r(k(n-1)+t)}$ and $|<x>|=p^{r(k(n-1)+t-1)}$. Also, $R/<x>\cong \F_{p^r}$.
	\item The largest Galois ring in $R$ is $GR(p^n,r)$ and it is called the \textit{coefficient ring} of $R$.
	\item The Teichmuller set $T$ of the coefficient ring $S$ of a finite commutative chain ring $R$ is also considered as the Teichmuller set of $R$. Using $T$ and the generator $x$ of the maximal ideal, each element of $R$ can be written uniquely as
		$$
		a_0+a_1x+...+a_{k(n-1)+t-1}x^{k(n-1)+t-1},
		$$
for some $a_0,...,a_{r-1}\in T$. Let $T=\{0,1,\omega,...,\omega^{p^r-2}\}$, another unique representation of elements in $R$ using $\omega$ can be given as
		$$
		a_0+a_1\omega+...+a_{r-1}\omega^{r-1},
		$$
for some $a_0,...,a_{r-1}\in \Z_{p^n}[x]/<g(x),p^{n-1}x^t>$. Accordingly, we can give also the unique representation 
	$$
	\sum_{i=0}^{r-1}\sum_{j=0}^{k-1}a_{i,j}\omega^ix^j,
	$$
where $a_{i,j}\in\Z_{p^n}$ for $0\leq i\leq r-1$ and $0\leq j\leq k-1$, but $0\leq a_{i,j}< p^{n-1}$ when $t\leq j\leq k-1$. 
\end{itemize}
\end{proposition}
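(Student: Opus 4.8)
The plan is to deduce the four bullets as follows. By Theorem~\ref{thm:fccr}, $R$ is already known to be a finite commutative chain ring, so it is enough to identify its maximal ideal as $\langle x\rangle$, to establish $R/\langle x\rangle\cong\F_{p^{r}}$, and to show that the nilpotency index of $x$ is exactly $\nu:=k(n-1)+t$; the first two bullets then follow from the standard structure theory of finite chain rings recalled above (the ideals of a chain ring with maximal ideal $\langle\pi\rangle$ and $\pi^{\nu}=0\ne\pi^{\nu-1}$ are precisely the $\langle\pi^{i}\rangle$ for $0\le i\le\nu$, each successive quotient has $|R/\langle\pi\rangle|$ elements, and every element has a unique $\pi$-adic expansion over a Teichm\"uller set of $R/\langle\pi\rangle$). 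I would factor the presentation of Theorem~\ref{thm:fccr} through the ring $R':=GR(p^{n},r)[X]/\langle g(X)\rangle$, which is free of rank $k$ over $GR(p^{n},r)$ with basis $1,x,\dots,x^{k-1}$ (since $g$ is monic of degree $k$) and satisfies $R=R'/\langle p^{n-1}x^{t}\rangle$. I expect the main obstacle to be showing that the nilpotency index of $x$ is \emph{exactly} $\nu$ — equivalently, that the relevant cardinality count is exact, $|R|=p^{r(k(n-1)+t)}$ and not smaller; this is where the Eisenstein hypothesis (through $x^{k}=p\cdot(\text{unit})$) and the bound $t\le k$ (through $\nu\le nk$) are used essentially. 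A secondary subtlety is making precise the ring $\Z_{p^{n}}[x]/\langle g(x),p^{n-1}x^{t}\rangle$ occurring in the second representation, $g$ having coefficients outside $\Z_{p^{n}}$; for both of these I would, if needed, follow \cite{McD1974,Nec2008}.

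I would first work inside $R'$. The relation $g(x)=0$ gives $x^{k}=-p\,(a_{0}+a_{1}x+\cdots+a_{k-1}x^{k-1})$, where the factor in parentheses is a unit since $a_{0}$ is; hence $x^{k}=pu$ for some unit $u$, so $x$ is nilpotent, $\langle x^{k}\rangle=\langle p\rangle$, and $p=x^{k}u^{-1}$ has $x$-adic valuation $k$ in $R'$. Reducing modulo $\langle x\rangle$ turns $g(x)=0$ into $p=0$, so $R'/\langle x\rangle\cong GR(p^{n},r)/\langle p\rangle\cong\F_{p^{r}}$; combined with $|R'|=|GR(p^{n},r)|^{k}=p^{nrk}$, this exhibits $R'$ as a finite chain ring with residue field of size $p^{r}$, hence of nilpotency index exactly $nk$. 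Since $p^{n-1}x^{t}=x^{k(n-1)}u^{-(n-1)}x^{t}=x^{\nu}\cdot(\text{unit})$, we get $\langle p^{n-1}x^{t}\rangle=\langle x^{\nu}\rangle$ and therefore $R=R'/\langle x^{\nu}\rangle$; because $1\le\nu\le nk$ (the upper bound uses $t\le k$), the ideals of $R$ are exactly the images of $\langle x^{i}\rangle$, $0\le i\le\nu$, each properly containing the next. Hence the maximal ideal of $R$ is $\langle x\rangle$, the nilpotency index of $x$ is exactly $\nu$, $\langle x^{k}\rangle=\langle p\rangle$ persists, and $|R|=(p^{r})^{\nu}=p^{r(k(n-1)+t)}$, $|\langle x\rangle|=p^{r(k(n-1)+t-1)}$, $R/\langle x\rangle\cong\F_{p^{r}}$, which is the content of the first two bullets.

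For the third bullet, I would show that $GR(p^{n},r)\hookrightarrow R'\twoheadrightarrow R$ is injective: a nonzero $c\in GR(p^{n},r)$ equals $p^{j}\cdot(\text{unit})$ with $0\le j\le n-1$, of $x$-adic valuation $jk\le(n-1)k<\nu$ in $R'$, so $c\notin\langle x^{\nu}\rangle$ and $c\ne0$ in $R$; thus $GR(p^{n},r)$ is a Galois subring. Conversely, any Galois subring $GR(p^{n'},r')\subseteq R$ has the same characteristic as $R$ (the additive order of $1$ does not change on passing to a unital subring), and $\mathrm{char}\,R=p^{n}$, because $p^{n}=0$ (inherited from $\Z_{p^{n}}$) while $p^{n-1}=x^{k(n-1)}\cdot(\text{unit})\ne0$; hence $n'=n$. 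Its maximal ideal $\langle p\rangle$ lies in $\langle x\rangle$, so reduction modulo the maximal ideals yields a field embedding $\F_{p^{r'}}\hookrightarrow\F_{p^{r}}$, forcing $r'\mid r$; therefore $|GR(p^{n'},r')|=p^{nr'}\le p^{nr}=|GR(p^{n},r)|$, and $GR(p^{n},r)$ is the largest Galois subring, i.e.\ the coefficient ring $S$.

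For the last bullet, the Teichm\"uller set $T\subseteq GR(p^{n},r)\subseteq R$ maps bijectively onto $\F_{p^{r}}\cong R/\langle x\rangle$ (it already does so over $GR(p^{n},r)$, and $GR(p^{n},r)\to R/\langle x\rangle$ is the reduction map), so it is a Teichm\"uller set of $R$, and $a=\sum_{j=0}^{\nu-1}a_{j}x^{j}$ with $a_{j}\in T$ (the indices running up to $k(n-1)+t-1$) is the standard $x$-adic expansion, unique since there are precisely $|T|^{\nu}=|R|$ such sums. The two $\omega$-expansions I would obtain directly: reducing powers $x^{j}$ with $j\ge k$ through $x^{k}=-p(a_{0}+\cdots+a_{k-1}x^{k-1})$, and powers $\omega^{i}$ with $i\ge r$ through $GR(p^{n},r)=\Z_{p^{n}}[\omega]=\Z_{p^{n}}[X]/\langle f(X)\rangle$ with $f$ monic of degree $r$, writes every element of $R$ as a $\Z_{p^{n}}$-combination $\sum_{i=0}^{r-1}\sum_{j=0}^{k-1}a_{i,j}\omega^{i}x^{j}$; moreover $p^{n-1}x^{t}=0$ forces $p^{n-1}\omega^{i}x^{j}=0$ for $j\ge t$, so the digit $a_{i,j}$ is determined only modulo $p^{n-1}$ when $t\le j\le k-1$. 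The admissible tuples number $p^{nrt}\cdot p^{(n-1)r(k-t)}=p^{r(k(n-1)+t)}=|R|$, so the representation is unique; regrouping it by powers of $\omega$ as $\sum_{i=0}^{r-1}\bigl(\sum_{j=0}^{k-1}a_{i,j}x^{j}\bigr)\omega^{i}$ yields the intermediate representation, whose coefficients range over the image of $\Z_{p^{n}}[x]$ in $R$ (written in the statement as $\Z_{p^{n}}[x]/\langle g(x),p^{n-1}x^{t}\rangle$).
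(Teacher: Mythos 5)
Your proof is correct. Note that the paper itself offers no argument for this proposition at all --- it simply defers to \cite{McD1974,Nec2008} --- so any comparison is with the standard references rather than with an in-paper proof. Your route is a clean, self-contained one: you first analyze the free ring $R'=GR(p^n,r)[x]/\langle g(x)\rangle$, use the Eisenstein relation to get $x^k=p\cdot(\text{unit})$ (the factor $a_0+a_1x+\cdots$ is a unit because $a_0$ is and $x$ is nilpotent), conclude that $R'$ is a chain ring with maximal ideal $\langle x\rangle$, residue field $\F_{p^r}$ and nilpotency index exactly $nk$ from the cardinality count $|R'|=p^{nrk}=(p^r)^{nk}$, and then identify $\langle p^{n-1}x^t\rangle=\langle x^{k(n-1)+t}\rangle$ so that $R=R'/\langle x^\nu\rangle$ with $\nu=k(n-1)+t\le nk$; everything in the first two bullets falls out of the ideal correspondence. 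The injectivity of $GR(p^n,r)\to R$ via valuations, the maximality argument via characteristic plus residue-field embedding, and the counting arguments ($|T|^\nu=|R|$ and $p^{nrt}p^{(n-1)r(k-t)}=p^{r\nu}$) for uniqueness of the three representations are all sound; you also correctly isolate the only real subtlety in the last bullet, namely that $p^{n-1}x^j=0$ for $j\ge t$ is exactly what reduces the digits $a_{i,j}$ modulo $p^{n-1}$ in those positions, and you rightly flag that the notation $\Z_{p^n}[x]/\langle g(x),p^{n-1}x^t\rangle$ presupposes $g\in\Z_{p^n}[x]$ (which is the convention the paper adopts from Section 4 onward). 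What your approach buys over the paper's bare citation is a uniform derivation of all four bullets from the single identity $\langle p^{n-1}x^t\rangle=\langle x^\nu\rangle$ in the free cover $R'$, at the cost of quietly invoking the standard facts about finite chain rings (ideals of a finite local ring with principal nilpotent maximal ideal form a chain $\langle\pi^i\rangle$, and successive quotients are one-dimensional over the residue field) that the paper recalls in Section 2.
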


Now, we give an important idea we use in this paper. For the proof, see, for example \cite[Theorem 4.3.1]{BF2002} or \cite[Lemma 3.1]{DL2004}.  
\begin{theorem}\label{thm:ext}
Let $S=GR(p^n,r)$ and $S_e=GR(p^n,rs)=S[\zeta]$ for some basic primitive element $\zeta$ of degree $s>1$ over $S$. Let $R=S[x]/<g(x),p^{n-1}x^t>$ be a finite commutative chain ring constructed over $S$. Then $R_e=S_e[\zeta]$ is a finite commutative chain ring satisfying $R_e=S_e[x]/<g(x),p^{n-1}x^t>$. 
\end{theorem}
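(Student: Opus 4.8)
The plan is to build the isomorphism $R_e \cong S_e[x]/\langle g(x), p^{n-1}x^t\rangle$ explicitly and then verify that the target ring is indeed a finite commutative chain ring by checking the hypotheses of Theorem \ref{thm:fccr}. First I would observe that, since $\zeta$ is a basic primitive element of degree $s$ over $S$, we have $S_e = S[\zeta] = S[Y]/\langle h(Y)\rangle$ for a basic primitive polynomial $h(Y)\in S[Y]$ of degree $s$, and moreover $h(Y)$ remains basic irreducible when its coefficients are pushed into $R = S[x]/\langle g(x), p^{n-1}x^t\rangle$; this is because reduction mod the maximal ideal of $R$ sends $h(Y)$ to $\overline{h}(Y)\in\F_{p^r}[Y]$, the same irreducible polynomial as before. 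Then $R_e$, defined as the extension of $R$ generated by a root of $h$, is $R_e = R[Y]/\langle h(Y)\rangle = S[x,Y]/\langle g(x), p^{n-1}x^t, h(Y)\rangle$. On the other hand, $S_e[x]/\langle g(x), p^{n-1}x^t\rangle = S[\zeta][x]/\langle g(x), p^{n-1}x^t\rangle = S[Y,x]/\langle h(Y), g(x), p^{n-1}x^t\rangle$. These two presentations coincide, giving the ring-theoretic identity $R_e = S_e[x]/\langle g(x), p^{n-1}x^t\rangle$ once we know that the $x$-adic data $g$ and $p^{n-1}x^t$ behave the same way over $S_e$ as over $S$.

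Next I would check that this presentation over $S_e$ really satisfies the defining conditions in Theorem \ref{thm:fccr}. The polynomial $g(x) = x^k + p(a_{k-1}x^{k-1} + \dots + a_0)$ has all its non-leading coefficients in $pS \subseteq pS_e$ and its constant term $pa_0$ with $a_0$ still a unit in $S_e$ (a unit of $S$ maps to a unit of the extension $S_e$, since the extension is local and the residue map is compatible), so $g$ is still an Eisenstein polynomial over $S_e$; and the side relation $p^{n-1}x^t$ with $1\le t\le k$ (or $t=k$ when $n=1$) is of exactly the prescribed shape. Hence Theorem \ref{thm:fccr} applies directly and tells us that $S_e[x]/\langle g(x), p^{n-1}x^t\rangle$ is a finite commutative chain ring with coefficient ring $S_e = GR(p^n, rs)$. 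Combined with the identification from the first step, this shows $R_e$ is a finite commutative chain ring of the stated form.

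The one point that needs genuine care — and which I expect to be the main obstacle — is showing that $R_e$, a priori only defined as ``the finite commutative chain ring extension of $R$ obtained by adjoining $\zeta$,'' is the \emph{same} object as $R[\zeta]$ built via the polynomial $h$, i.e.\ that adjoining $\zeta$ to $R$ commutes with the construction of $R$ from $S$. Concretely one must verify that $h(Y)$ stays basic irreducible over $R$ (not merely over $S$), that the resulting ring $R[Y]/\langle h(Y)\rangle$ is local with maximal ideal $\langle x\rangle$ and residue field $\F_{p^{rs}}$, and that no unexpected collapsing occurs between the relation $p^{n-1}x^t$ and the new variable $Y$ — all of which follow from the fact that $h$ has unit leading coefficient and hence $R[Y]/\langle h(Y)\rangle$ is a free $R$-module of rank $s$, so that $|R_e| = |R|^s = p^{rs(k(n-1)+t)}$ matches $|S_e[x]/\langle g(x), p^{n-1}x^t\rangle|$ exactly. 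Once the cardinalities agree and one has a surjective (or injective) ring homomorphism between the two presentations, it is an isomorphism, and the proof is complete; the remaining verifications are the routine bookkeeping with Teichmüller representations that I would not grind through here.
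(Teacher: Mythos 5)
Your argument is correct. Note first that the paper does not actually prove Theorem \ref{thm:ext}; it defers entirely to \cite[Theorem 4.3.1]{BF2002} and \cite[Lemma 3.1]{DL2004}, so there is no in-paper proof to measure yours against — what you have written is a self-contained substitute. The two load-bearing observations are exactly the right ones: (i) writing $S_e=S[Y]/\langle h(Y)\rangle$ for the monic basic primitive polynomial $h$ of $\zeta$, both $R[\zeta]=R[Y]/\langle h(Y)\rangle$ and $S_e[x]/\langle g(x),p^{n-1}x^t\rangle$ are the \emph{same} quotient $S[x,Y]/\langle g(x),\,p^{n-1}x^t,\,h(Y)\rangle$, since iterated quotients by the images of ideal generators agree with the single quotient by the ideal they jointly generate, in either order; and (ii) $g$ remains Eisenstein over $S_e$ because the unit $a_0\in S$ stays a unit in $S_e$, so Theorem \ref{thm:fccr} applies verbatim over the coefficient ring $GR(p^n,rs)$ and yields the chain-ring property. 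One remark: because point (i) already gives an equality of rings (not merely a map between two presentations), the precautions in your last paragraph — checking that $h$ stays basic irreducible over $R$, that $R[Y]/\langle h(Y)\rangle$ is local with residue field $\F_{p^{rs}}$, and the cardinality count $|R|^s=p^{rs(k(n-1)+t)}$ to upgrade a surjection to an isomorphism — are not needed for the conclusion; they are consistency checks rather than missing steps. They do no harm, and the basic-irreducibility of $h$ over $R$ (via $p\in\langle x\rangle$, so $R/\langle x\rangle\cong\F_{p^r}$ restricts to the usual reduction on $S$) is worth recording anyway since the paper later uses $R_e=R[\zeta]$ with $\zeta$ of multiplicative order $p^{rs}-1$.
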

In other words, we can extend a finite chain ring using the extension of its coefficient ring. We can draw the tree of extensions as in Figure \ref{tree:fccr}.

\begin{figure}[h]
\caption{Tree of extensions of a finite commutative chain ring}
\centering
\begin{tikzpicture}[node distance=2cm]
\title{Tree of extensions of finite commutative chain rings}
\label{tree:fccr}
\node(Se)							                      {$R_e=\frac{S_e[x]}{<g(x),p^{n-1}x^t>}=R[\zeta]$};
\node(S)   	[below right=3cm and 1.5cm of Se] {$R=\frac{S[x]}{<g(x),p^{n-1}x^t>}$};
\node(Re)   [below left=3cm and 4.5cm of Se] 	{$S_e=S[\zeta]$};
\node(R)   	[below left=6cm and 3cm of Se] {$S=GR(p^n,r)$};

\node(d1)   [below right=1cm and 0.7cm of Se] {$s$ (by $\zeta$)};
\node(d2)   [below right=4.8cm and -1.5cm of Se] {$k-\frac{k-t}{n}$ (by $g(x)$ and $t$)};
\node(d3)   [below left=0.6cm and 1.8cm of Se] {$k-\frac{k-t}{n}$};
\node(d31)  [below left=1.1cm and 1.7cm of Se] {(by $g(x)$ and $t$)};
\node(d4)   [below left=4.1cm and 2.2cm of Se] {$s$ (by $\zeta$)};

\draw(Se)--(S);
\draw(Se)--(Re);
\draw(R)--(S);
\draw(R)--(Re);
\end{tikzpicture}
\end{figure}
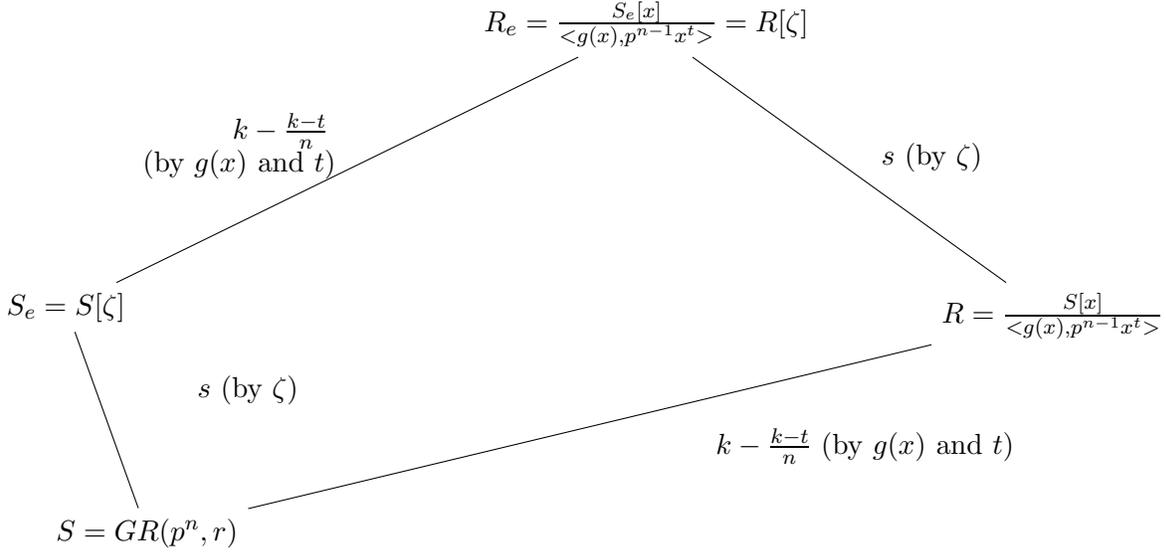
\begin{remark}
Let $S$ be a finite commutative chain ring and $R$ be an extension of it. It makes sense to determine a kind of ``degree of extension" considering $Log_{|S|}(|R|)$. Remark that this degree of extension may not be an integer. See Figure \ref{tree:fccr} and as an example see Figure \ref{tree:fccrEx}. 
\end{remark}

\begin{remark} 
Notice that a finite commutative chain ring is a Galois ring if and only if $k=1$. In case $n=1$, a finite commutative chain ring is of the form $\F_{p^r}[x]/<x^k>$ and it is called \textit{quasi-Galois ring}. Similar to Galois rings, quasi-Galois rings are unique up to isomorphism for a given characteristic and cardinality. However, a proper (neither Galois nor quasi-Galois) finite commutative chain ring may not be unique up to isomorphism for given characteristic, rank and cardinality. Example \ref{ex:1}(ii) below illustrates this fact. 
\end{remark}

\begin{example}\label{ex:1}
\begin{itemize}
	\item[i)] Consider 
$$R=\frac{\Z_4[x]}{<x^2+2,2x>}\ \left(=\frac{\Z_4[x]}{<x^2+2x+2,2x>}\right).$$ 
$R$ has $8$ elements and is additively equivalent to $\Z_2\oplus\Z_4$. It has $4$ units and its multiplicative group is isomorphic to $\Z_4$. Also $R/<x>\cong \F_2$.  
	\item[ii)] Consider
$$R_1=\frac{\Z_4[x]}{<x^2+2,2x^2>}\ \left(=\frac{\Z_4[x]}{<x^2+2>}\right).$$
$R_1$ has $16$ elements, and additively equivalent to $\Z_4\oplus\Z_4$. Similarly,
$$R_2=\frac{\Z_4[x]}{<x^2+2x+2,2x^2>}\ \left(=\frac{\Z_4[x]}{<x^2+2x+2>}\right)$$
has $16$ elements and additively equivalent to $\Z_4\oplus\Z_4$. Moreover, $R_1^*\cong R_2^*\cong \Z_2\oplus \Z_4$, where $R_i^*$ denotes the set of units for $1\leq i\leq 2$. Also $R_1/<x>\cong R_2/<x> \cong \F_2$. However, $R_1$ and $R_2$ are not isomorphic as rings.  
	\item[iii)] Consider
$$R_e=\frac{\Z_4[\zeta][x]}{<x^2+2,2x>},$$
where $\zeta$ is a root of the basic primitive polynomial $f(X)=X^2+X+1\in\Z_4[X]$. $R_e$ has $64$ elements, and is additively equivalent to $\Z_2\oplus\Z_2\oplus\Z_4\oplus\Z_4$. Moreover, $R_e$ is an extension of $R$ given in (i) with $R_e=R[\zeta]$, and also $R_e/<x>\cong \F_4=\F_2(\overline{\zeta})$. Similarly, $S_e=\Z_4[\zeta]$ is a Galois ring as an extension of $S=\Z_{4}$. See Figure \ref{tree:fccrEx} for the tree of extension (remark that the extension degree of $R_e$ over $S_e$ is not an integer).   
\end{itemize}
\end{example}

\begin{figure}[h]
\caption{Tree of extensions: a proper example}
\centering
\begin{tikzpicture}[node distance=2cm]
\title{Tree of extensions in Example \ref{ex:1}(iii)}
\label{tree:fccrEx}
\node(Se)							                      {$R_e=\frac{\Z_4[\zeta][x]}{<x^2+2,2x>}$};
\node(S)   	[below right=3cm and 1.5cm of Se] {$R=\frac{\Z_4[x]}{<x^2+2,2x>}$};
\node(Re)   [below left=3cm and 4.5cm of Se] 	{$S_e=\Z_4[\zeta]$};
\node(R)   	[below left=6cm and 3cm of Se] {$S=\Z_4$};

\node(d1)   [below right=1cm and 0.7cm of Se] {$2$};
\node(d2)   [below right=4.8cm and -1.5cm of Se] {$\frac{3}{2}$};
\node(d3)   [below left=0.6cm and 1.8cm of Se] {$\frac{3}{2}$};
\node(d4)   [below left=4.1cm and 3.7cm of Se] {$2$};

\draw(Se)--(S);
\draw(Se)--(Re);
\draw(R)--(S);
\draw(R)--(Re);
\end{tikzpicture}
\end{figure}
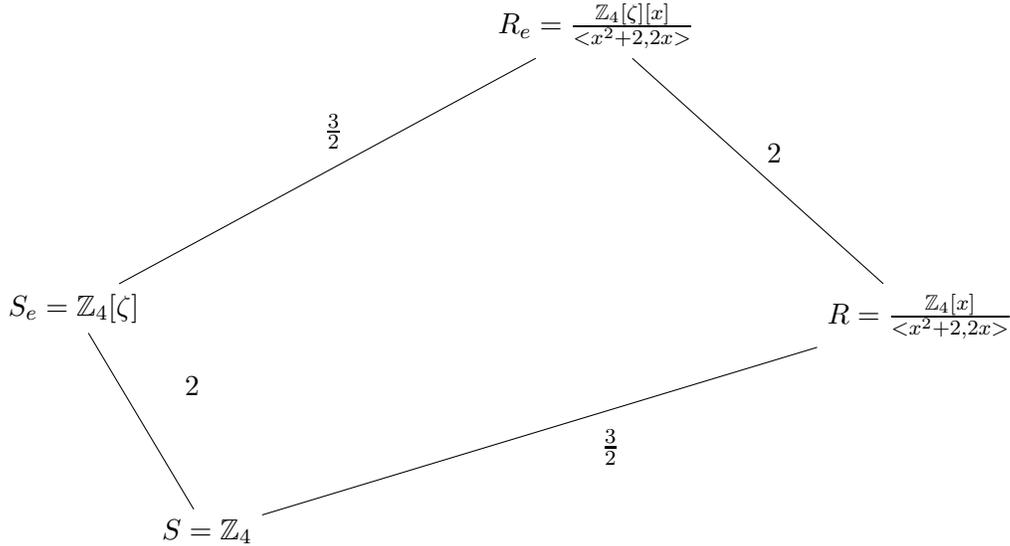

Polynomials over finite commutative finite chain rings have a kind of ``unique factorization" under some assumptions. The following theorem, which is also known as \textit{Hensel's lemma}, demonstrates this property. It is the adapted version of Hensel's result in \cite{Hen1918} to our context.

\begin{theorem}\label{thm:hen}
Let $R$ be a finite commutative chain ring and $I$ be its maximal ideal. Also let $h(X),h_1(X), ..., h_l(X)\in R[X]$ be monic polynomials such that 
\begin{itemize}
	\item $h(X)= h_1(X)\cdots h_l(X)$,
	\item $\overline{h_i}(X)\in R[X]/I$ are pairwise coprime for all $1\leq i\leq l$, and
	\item $\overline{h_i}(0)\neq 0$ for all $1\leq i\leq l$.
\end{itemize}
Then, the factorization $h(X)= h_1(X)\cdots h_l(X)$ is unique up to reordering. 
\end{theorem}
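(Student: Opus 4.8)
The plan is to establish uniqueness by induction on $l$, reducing the whole statement to the case of two factors, which I would then settle by lifting a B\'ezout (coprimality) relation from the residue field up to $R[X]$.

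First I would pass to the residue field $F=R/I\cong\F_{p^r}$, writing $\overline{a}(X)\in F[X]$ for the coefficientwise reduction of $a(X)\in R[X]$. Given a competing factorization $h=g_1\cdots g_m$ satisfying the same hypotheses, reduction modulo $I$ yields $\overline{h_1}\cdots\overline{h_l}=\overline{g_1}\cdots\overline{g_m}$ in the unique factorization domain $F[X]$; since every $h_i$ and $g_j$ is monic, each reduction is monic of the same degree, and the hypotheses say that each of the two families is pairwise coprime with nonzero constant terms. Using unique factorization in $F[X]$ one then matches the two families: after reindexing, $m=l$ and $\overline{g_i}=\overline{h_i}$ for all $i$. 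Hence it is enough to prove the following sharper statement: if $h=h_1\cdots h_l=g_1\cdots g_l$ with all $h_i,g_i$ monic, $\overline{g_i}=\overline{h_i}$ for each $i$, and $\overline{h_1},\dots,\overline{h_l}$ pairwise coprime in $F[X]$, then $g_i=h_i$ for all $i$.

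For the two-factor case, suppose $h_1h_2=g_1g_2$ with $\overline{g_1}=\overline{h_1}$, $\overline{g_2}=\overline{h_2}$ and $\gcd(\overline{h_1},\overline{h_2})=1$. I would work in $A=R[X]/\langle h_1\rangle$, which (as $h_1$ is monic) is a free $R$-module of finite rank and satisfies $A/IA\cong F[X]/\langle\overline{h_1}\rangle$. The image of $g_2$ in $F[X]/\langle\overline{h_1}\rangle$ is $\overline{h_2}\bmod\overline{h_1}$, which is a unit because $\overline{h_1},\overline{h_2}$ are coprime in the PID $F[X]$; so $g_2c=1-\varepsilon$ in $A$ for some $c\in A$ and some $\varepsilon\in IA$. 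Here is where the chain-ring structure enters: $I$ is nilpotent (the ideal chain of $R$ terminates at $0$), so $\varepsilon$ is nilpotent in $A$, $1-\varepsilon$ is a unit, and therefore $g_2$ is a unit in $A$. Reducing $h_1h_2=g_1g_2$ modulo $h_1$ gives $g_1g_2\equiv0$ in $A$, hence $g_1\equiv0$, i.e. $h_1\mid g_1$ in $R[X]$; since $g_1$ and $h_1$ are monic of the same degree (both equal to $\deg\overline{h_1}$), this forces $g_1=h_1$. Finally $h_1$, being monic, is a non-zero-divisor in $R[X]$, so it cancels from $h_1h_2=h_1g_2$ to give $h_2=g_2$.

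The induction step is then routine: for $l\geq 3$ I would apply the two-factor case to $h=h_1\cdot(h_2\cdots h_l)=g_1\cdot(g_2\cdots g_l)$, using that $\overline{h_1}$ is coprime to $\overline{h_2}\cdots\overline{h_l}$ and that the reductions of $h_2\cdots h_l$ and $g_2\cdots g_l$ coincide; this yields $h_1=g_1$ and $h_2\cdots h_l=g_2\cdots g_l$, and the inductive hypothesis applied to the pairwise coprime family $\overline{h_2},\dots,\overline{h_l}$ finishes the proof. I expect the only genuinely non-formal point to be the two-factor case, and within it the fact that a polynomial whose reduction is coprime to $\overline{h_1}$ becomes invertible modulo $h_1$ --- this is exactly where finiteness of the chain ring is used, through the nilpotence of its maximal ideal. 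A secondary point requiring a little care is the initial matching of reduced factors, which rests on unique factorization in $F[X]$ together with the hypotheses that the $\overline{h_i}$ are pairwise coprime with nonzero constant terms.
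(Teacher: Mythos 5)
The paper offers no proof of this theorem---it is stated as an adaptation of Hensel's classical result with a pointer to \cite{Hen1918}---so there is no in-paper argument to compare against. Your central argument is the standard one and is correct: working in $A=R[X]/\langle h_1\rangle$, lifting a B\'ezout relation from $F[X]/\langle\overline{h_1}\rangle$ to see that $g_2$ is a unit of $A$ (nilpotence of $I$, hence of $IA$, is exactly the right ingredient), concluding $h_1\mid g_1$ and then $g_1=h_1$ by comparing degrees of monic polynomials, and cancelling the monic non-zero-divisor $h_1$. The induction step is also fine. (You never use the hypothesis $\overline{h_i}(0)\neq 0$; that is harmless---it is there for the later application to $X^N-1$, not for uniqueness.)

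There is, however, one genuinely false step: the claim that unique factorization in $F[X]$ lets you match the two reduced families, i.e.\ that $\overline{h_1}\cdots\overline{h_l}=\overline{g_1}\cdots\overline{g_m}$ with both families pairwise coprime, monic, and with nonzero constant terms forces $m=l$ and $\overline{g_i}=\overline{h_i}$ after reindexing. It does not: over $F=\F_5$, the factorizations $\bigl((X-1)(X-2)\bigr)\cdot(X-3)$ and $(X-1)\cdot\bigl((X-2)(X-3)\bigr)$ of the same cubic both satisfy all three bullet points with $l=2$, yet do not agree up to reordering. The same example (a field is a finite chain ring) shows the theorem is simply false if ``unique up to reordering'' is read as uniqueness among \emph{all} factorizations satisfying the hypotheses. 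The only tenable reading---confirmed by the sentence after the theorem calling the $h_i$ the ``Hensel's lifting of the polynomials $\overline{h_i}$''---is exactly your ``sharper statement'': the monic factorization of $h$ lifting a \emph{given} pairwise-coprime factorization of $\overline{h}$ is unique. So you should delete the matching step, take $\overline{g_i}=\overline{h_i}$ as part of the hypothesis rather than something to be derived, and keep the rest of your argument unchanged; as written, the proof contains a wrong assertion even though the part that does the real work is sound.
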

The polynomials $h_i$ in Theorem \ref{thm:hen} are known as the \textit{Hensel's lifting} of the polynomials $\overline{h_i}$, for all $1\leq i\leq l$. 

Now we give automorphisms of finite chain rings. Let 
$$
R=\frac{GR(p^n,r)[x]}{<g(x),p^{n-1}x^t>} \text{ and } R_e=\frac{GR(p^n,rs)[x]}{<g(x),p^{n-1}x^t>}
$$
be two finite commutative chain rings. Hence, $R_e$ is an extension of $R$. Let $R_e=R[\zeta]$ for some basic primitive element $\zeta\in GR(p^n,rs)$ over $GR(p^n,r)$ of degree $s$, the automorphisms of $R_e$ keeping the elements in $R$ fixed are of the form
\begin{equation}
\label{eq:aut}
a_0+a_1\zeta+...+a_{s-1}\zeta^{s-1}\mapsto a_0+a_1\zeta^{q^{i}}+...+a_{s-1}\zeta^{(s-1)q^{i}}
\end{equation}
for some $0\leq i\leq s-1$, where $q=p^r$ and $a_j\in R$ for all $0\leq j\leq s-1$ (for the proofs and more detail, see, for example \cite[Corollary 5.1.5]{BF2002}). We will denote the automorphism group (i.e., the set of such maps with composition) by $Aut_R(R_e)$. Remark that the roots of a basic irreducible polynomial over $R_e$ are conjugate each other under this automorphism.

\section{Codes over finite commutative chain rings}\label{sec:codes}

Let $R$ be a ring and $N$ be a positive integer, a non-empty subset $\sC$ of $R^N=\{(a_1,...,a_N):a_i\in R\text{ for all }i=1,..,N\}$ is called a \textit{code} over $R$ of length $N$. An element of a code $\sC$ is called \textit{codeword}. The function $d$ on $R^N\times R^N$ given by $d(a,b)=|\{i:a_i\neq b_i\}|$ for all $a=(a_1,...,a_N),b=(b_1,...,b_N)\in R^N$ is a metric and it is called \textit{Hamming distance}. \textit{Minimum distance} $d(\sC)$ of a code $\sC$ is defined by $d(\sC)=\min\{d(a,b):a,b\in\sC\text{ and }a\neq b\}$. The \textit{weight} of a codeword $a\in \sC$ is defined by $w(a)=d(a,0)$. Similarly, the \textit{minimum weight} of a code $\sC$ is defined as the minimum non-zero weight in $\sC$ and denoted by $w(\sC)$. 

Let $S$ be a subring of $R$, then we say that a code $\sC$ is $S$-\textit{linear} if it is a module over $S$. Such codes are also known as \textit{additive} codes, and in particular, $R$-linear codes are known as \textit{linear} codes. Note that, if $\sC$ is an additive code, then $d(\sC)=w(\sC)$. This property and some other properties make additive codes quite interesting and useful in both theory and application. 

The \textit{Euclidean inner product} of $R^N$ is defined by $(a ,b)_E=\sum_{i=1}^{N}a_ib_i$ for all $a=(a_1,...,a_N)$ and $b=(b_1,...,b_N)$ in $R^N$. When a code $\sC\subseteq R^N$ is given, the \textit{Euclidean orthogonal code} $\sC^{\perp_E}$ of $\sC$ is given by $\sC^{\perp_E}=\{a\in R^N:(a,b)_E=0\text{ for all }b\in \sC\}$. It can be shown directly that the Euclidean orthogonal code of a code is a linear code and that in the case of linear codes over Galois rings it coincides with the dual code, see \cite{SW2004}.
  
A code $\sC$ is \textit{cyclic} if $(a_N,a_1,...,a_{N-1})\in\sC$ for all $(a_1,...,a_{N})\in\sC$. Cyclic codes correspond to a subset of $R[X]/<X^N-1>$, which is closed under multiplication by $X$. In particular, there is a one to one correspondence between a linear cyclic code and an ideal of $R[X]/<X^N-1>$. In the literature, the term ``cyclic code" is generally understood as ``linear cyclic code". However, there are non-linear cyclic code constructions, too. A recent example is \cite{CGFC2015}, in which the authors investigated additive cyclic codes over Galois rings. In detail, the authors constructed $\Z_{p^n}$-linear cyclic codes over $\Z_{p^n}[\omega]$, where $\omega$ is a basic primitive element of degree prime $r$. They also investigated the duality of such codes. In the following two sections, we generalize their results from Galois rings to finite commutative chain rings. 

In Section \ref{sec:addGal}, we construct $S=\Z_{p^n}[x]/<g(x),p^{n-1}x^t>$-linear cyclic codes over $R=\Z_{p^n}[\omega][x]/<g(x),p^{n-1}x^t>$ and investigate the duality of such codes, where $g(x)\in\Z_{p^n}[x]$ is a monic Eisenstein polynomial of degree $k$ and $\omega$ is a basic primitive element over $\Z_{p^n}$ of degree prime $r$. Our results in this set up occur as a natural generalization of the results in \cite{CGFC2015}. However, the method we use in our generalization is slightly different from the one in \cite{CGFC2015}. We just make use of ideals and do not get involved generator matrices, whereas the authors in \cite{CGFC2015} were using some linear codes over the base ring $S$ and their generator matrices. We call such codes \textit{Galois-additive codes} because $R$ is a Galois extension of $S$.

In Section \ref{sec:addEis}, we construct $S=\Z_{p^n}$-linear cyclic codes over $R=\Z_{p^n}[x]/<g(x),p^{n-1}x^t>$ and investigate the duality of such codes, where $g(x)\in\Z_{p^n}[x]$ is a monic Eisenstein polynomial of degree $k$. Our results in this set up are not very similar to the ones in Section \ref{sec:addGal}, especially considering duality. The main reason here is the fact that $R$ is not a free module over $S$. Therefore, we can not ensure that we can generate all the additive characters by means of trace functions. Hence, in the case we use Euclidean inner product (Euclidean orthogonality) we observe that some properties are not satisfied as expected. Thus we use directly the 
duality notion in Section 5.1 as stated in \cite{Woo2008} using character theory. Since $R$ is obtained by an Eisenstein polynomial from $S$, we call such additive codes as \textit{Eisenstein-additive codes}.      

\section{Galois-additive cyclic codes}\label{sec:addGal}
	
Let $R=\frac{\Z_{p^n}[\omega][x]}{<g(x),p^{n-1}x^t>}$ and $S=\frac{\Z_{p^n}[x]}{<g(x),p^{n-1}x^t>}$, where $p$ is a prime, $n$ is a positive integer, $\omega$ is a basic irreducible element over $\Z_{p^n}$ of degree $r$ (which is a prime) and $g(x)\in\Z_{p^n}[x]$ is an Eisenstein polynomial of degree $k$. Also we denote the nilpotency index by $m=k(n-1)+t$. Clearly $S$ is a subring of $R$. Let $\sR=R[X]/<X^N-1>$ and $\sS=S[X]/<X^N-1>$, where $N$ is a positive integer satisfying $\gcd (N,p)=1$. Similarly $\sS$ is a subring of $\sR$. On the other hand, $\sR$ is a module over $R$ (and also over $S$). In addition, there is a one to one correspondence between $\sR$ and $R^N$ given by $a_0+a_1X+...+a_{N-1}X^{N-1}\leftrightarrow (a_0,a_1,...,a_{N-1})$. On the grounds of this correspondence, we will say ``code" for both a non-empty subset of $\sR$ and a non-empty subset of $R^N$. In this section, we will construct $S$-linear cyclic codes $\sC\subseteq \sR$ and investigate their duality. 
	
Let $C_i^{(b)}=\{ib^j \mod N: j\in\Z\}$, where $b\in\{p,p^r\}$. Also let $\kappa_i=|C_i^{(p)}|$ and $\kappa_{i,j}=|C_{ip^j}^{(p^r)}|$, for $0\leq j\leq r-1$. Then we have the following.
	\begin{enumerate}
		\item If $\gcd (\kappa_i,r)=1$, then $C_i^{(p)}=C_i^{(p^r)}$.
		\item If $r$ divides $\kappa_i$, then $\kappa_i=r|C_i^{(p^r)}|$ and $C_i^{(p)}=\bigcup_{j=0}^{r-1}C_{ip^j}^{(p^r)}$.
	\end{enumerate}

Let $s=\min \{j\in\Z^+:(p^r)^j\equiv 1 \mod N\}$ and $\widehat{R}=R[\zeta]$, where $\zeta$ is a basic primitive element over $R$ of degree $s$. Notice that the multiplicative order of $\zeta$ is $p^{rs}-1$ and $\omega =\zeta^{\frac{p^{rs}-1}{p^{r}-1}}$. The set $\widehat{T}=\{0,1,\zeta,...,\zeta^{p^{rs}-2}\}$ is a Teichmuller set of $\widehat{R}$. For each element $a\in\widehat{R}$, consider the unique representation $a=a_0+a_1x+...+a_{m-1}x^{m-1}$ of $a$, where $a_i\in\widehat{T}$ for $0\leq i \leq m-1$. Using this representation, let $\widehat{\phi}:a\mapsto a_0^p+a_1^px+...+a_{m-1}^px^{m-1}$. Notice that $\widehat{\phi}(b)=b$ if and only if $b\in R$. $\widehat{\phi}$ can be extended as $\widehat{R}[X]\rightarrow \widehat{R}[X]$ by $\sum_{i}a_iX^i\mapsto \sum_{i}\widehat{\phi}(a_i)X^i$. 

Similarly, the set $T=\{0,1,\omega,...,\omega^{p^{r}-2}\}$ is a Teichmuller set of $R$. Also we may define $\phi=\widehat{\phi}|_R$.  
	
Let $\eta =\zeta^{\frac{p^{rs}-1}{N}}$, clearly it is a primitive $N.$th root of unity. Also let $C_{j_0}^{(p)},C_{j_1}^{(p)},...,C_{j_v}^{(p)}$ be all distinct $p-$cyclotomic cosets modulo $N$, where $j_0=0$ and $1\leq j_1<...<j_v\leq N-1$ such that $\gcd (\kappa_{j_i},r)=1$ for $0\leq i\leq u$, and $\kappa_{j_i}$ is a multiple of $r$ for $u+1\leq i\leq v$. Hence, $C_{j_i}^{(p)}=C_{j_i}^{(p^r)}$ for $0\leq i\leq u$, $C_{j_i}^{(p)}\neq C_{j_i}^{(p^r)}$ and $C_{j_i}^{(p)}=C_{j_i}^{(p^r)}\cup C_{j_ip}^{(p^r)}\cup ...\cup C_{j_ip^{r-1}}^{(p^r)}$ for $u+1\leq i\leq v$.
	
Let $m_i(X)=\prod_{j\in C_{j_i}^{(p)}}X-\eta^j$ and $R_i=S[X]/<m_i(X)>$ for $0\leq i\leq v$. $m_i(X)$ is a basic irreducible polynomial over $S$. Now, let
$$
\epsilon_i(X)=\frac{1}{N}\sum^{N-1}_{j=0}\sum_{l\in C_{j_i}^{(p)}}\eta^{-jl}X^j\in \widehat{R}[X],
$$
for $0\leq i\leq v$. Here, $\widehat{\phi}(\epsilon_i(X))=\epsilon_i(X)$ and so $\epsilon_i(X)\in S[X]$, for all $0\leq i\leq v$. Let also $\sK_i=\epsilon_i (X)\sS$, for $0\leq i\leq v$. The following lemma can be proved by straightforward computations. 

\begin{lemma}\label{epsilon_i}
The following hold for all $0\leq i,j\leq v$.
	\begin{enumerate}
		\item $\epsilon_i(X)^2=\epsilon_i(X)$ and $\epsilon_i(X)\epsilon_j(X)= 0$ in $\sS$ when $i\neq j$. Also $\epsilon_0(X)+\epsilon_1(X)+...+\epsilon_v(X)=1$ in $\sS$.
		\item $\sS=\sK_0+\sK_1+...+\sK_v=\sK_0\oplus\sK_1\oplus ...\oplus\sK_v$, and $\epsilon_i(X)$ is the multiplicative identity of $\sK_i$.
		\item The map $\psi_i:R_i\rightarrow \sK_i$ given by $f(X)+<m_i(X)>\mapsto \epsilon_i(X)f(X)+<X^N-1>$ is a ring isomorphism. Hence, $\sK_i$ has a basis $\{\epsilon_i(X)X^j:j=0,1,...,\kappa_{j_i}-1\}$ over $S$.  
	\end{enumerate}
\end{lemma}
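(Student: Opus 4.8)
The plan is to establish the three items essentially in the order stated, bootstrapping each from the previous one, since they describe the primitive-idempotent decomposition of the semisimple ring $\sS = S[X]/\langle X^N-1\rangle$ lifted from the field case.

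First I would verify item (1). The key computational fact is that $\frac1N\sum_{j=0}^{N-1}\eta^{j(a-b)}$ equals $1$ if $a\equiv b\pmod N$ and $0$ otherwise — this uses $\gcd(N,p)=1$ so that $N$ is a unit in $S$ (indeed in $\widehat R$) and $\eta$ is a primitive $N$-th root of unity in $\widehat R$. Expanding $\epsilon_i(X)\epsilon_j(X)$ in $\widehat R[X]/\langle X^N-1\rangle$, the coefficient of $X^m$ becomes a double sum over $l\in C_{j_i}^{(p)}$, $l'\in C_{j_j}^{(p)}$ of $\eta$-powers that collapses, via the orthogonality relation, to a sum of $\eta^{-ml}$ over those $l$ that lie in both cyclotomic cosets. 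When $i\neq j$ the cosets are disjoint, giving $0$; when $i=j$ one recovers $\epsilon_i(X)$ itself. Summing all $\epsilon_i$ corresponds to summing over \emph{all} residues $l$ mod $N$ (since the cyclotomic cosets partition $\Z/N\Z$), and the orthogonality relation then yields the constant polynomial $1$. I should note that a priori $\epsilon_i(X)\in\widehat R[X]$, but $\widehat\phi$ fixes each $\epsilon_i(X)$ because the defining coefficient sums are symmetric under $l\mapsto l^p$ (the coset $C_{j_i}^{(p)}$ is closed under multiplication by $p$) and under $j \mapsto jp$; since $\widehat\phi$ fixes exactly the elements of $R$, and $R\cap \widehat R$-with-$\Z_{p^n}$-coefficients lands us in $S$, we get $\epsilon_i(X)\in S[X]$, hence $\sK_i\subseteq\sS$ is well defined.

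Item (2) is then formal algebra from item (1): the $\epsilon_i(X)$ are pairwise orthogonal idempotents summing to $1$ in the commutative ring $\sS$, so $\sS=\bigoplus_i \epsilon_i(X)\sS=\bigoplus_i\sK_i$ as a direct sum of ideals, and $\epsilon_i(X)$ acts as the identity on $\sK_i$ while killing $\sK_j$ for $j\neq i$. For item (3), I would check that $\psi_i$ is well defined by showing $m_i(X)$ maps into $\langle X^N-1\rangle$ after multiplication by $\epsilon_i(X)$ — equivalently $\epsilon_i(X)m_i(X)\equiv 0$, which follows because in $\widehat R[X]/\langle X^N-1\rangle$, $\epsilon_i(\eta^l)$ is the indicator of $l\in C_{j_i}^{(p)}$, and $m_i(\eta^l)=0$ exactly on that coset, so the product vanishes at all $N$-th roots of unity in $\widehat R$ and hence is $0$ in $\widehat R[X]/\langle X^N-1\rangle$; an integrality/$\widehat\phi$-invariance argument pushes this back to $\sS$. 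It is a ring homomorphism by construction (multiplication is compatible since $\epsilon_i(X)$ is idempotent), surjective onto $\sK_i=\epsilon_i(X)\sS$ since $\psi_i(f) = \epsilon_i(X)f$ and every element of $\sK_i$ has that shape, and injective by a dimension/cardinality count: both $R_i$ and $\sK_i$ are free $S$-modules, $R_i$ of rank $\deg m_i = \kappa_{j_i}$ with basis $\{X^j\}_{j<\kappa_{j_i}}$, and counting $|\sS|=\prod_i|\sK_i|=|S|^N$ forces $\operatorname{rank}_S\sK_i=\kappa_{j_i}$ as well (using $\sum_i \kappa_{j_i}=N$, since the cyclotomic cosets partition $\Z/N\Z$). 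Then $\{\epsilon_i(X)X^j : 0\le j<\kappa_{j_i}\}$ is the image of the $S$-basis of $R_i$ and hence an $S$-basis of $\sK_i$.

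The main obstacle is the bookkeeping in item (1): carefully tracking which pairs $(l,l')$ survive the orthogonality collapse, and confirming that $\epsilon_i(X)$ really does reduce modulo $X^N-1$ to the "indicator" idempotent that is classical over $\F_{p^r}$ — i.e. making precise the passage between evaluation at the $\eta^l$ in $\widehat R$ and polynomial identities in $\sS$. Everything in (2) is immediate once (1) is in hand, and the subtle point in (3) is only the well-definedness, i.e. the vanishing $\epsilon_i(X)m_i(X)=0$ in $\sS$, which again reduces to the same evaluation principle plus the observation (from Theorem~\ref{thm:hen}, Hensel) that $X^N-1=\prod_i m_i(X)$ is the lift of the factorization over $\F_{p^r}$ into products of minimal polynomials indexed by $p$-cyclotomic cosets.
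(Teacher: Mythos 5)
The paper offers no written proof of this lemma (it is dismissed as ``straightforward computations''), and your proposal supplies exactly the intended computation: the orthogonality relation $\frac{1}{N}\sum_{j}\eta^{j(a-b)}=[a\equiv b \bmod N]$ (valid in $\widehat{R}$ since $N$ and $\eta^{a-b}-1$ are units there), the resulting orthogonal-idempotent identities, descent to $\sS$ via $\widehat{\phi}$-invariance of the coefficient sums over the $p$-closed cosets, and the cardinality count $|\sS|=\prod_i|\sK_i|\le |S|^{\sum_i\kappa_{j_i}}=|S|^N$ forcing each $\psi_i$ to be injective. This is correct and matches the paper's (implicit) approach; no gaps worth flagging.
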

	
Now let $m_{i,h}=\prod_{j\in C_{j_ip^h}^{(p^r)}}X-\eta^j$ for $0\leq h\leq r-1$ and $u+1\leq i\leq v$. Similarly, $m_{i,h}(X)$ is a basic irreducible polynomial over $R$. Also let 
$$
\epsilon_{i,h}(X)=\frac{1}{N}\sum^{N-1}_{j=0}\sum_{l\in C_{j_ip^h}^{(p^r)}}\eta^{-jl}X^j\in \widehat{R}[X],
$$ 
for $u+1\leq i\leq v$ and $0\leq h\leq r-1$. Here, $\widehat{\phi}^r(\epsilon_{i,h}(X))=\epsilon_{i,h}(X)$ and so $\epsilon_{i,h}(X)\in R[X]$, for all $u+1\leq i\leq v$ and $0\leq h\leq r-1$. Let also $\sL_{i}=\epsilon_{i} (X)\sR$ and $\sL_{i,h}=\epsilon_{i,h} (X)\sR$, the following lemma, similar to Lemma \ref{epsilon_i}, can be proved directly by computations. 

\begin{lemma}\label{epsilon_ij} The following hold for all $u+1\leq i\leq v$ and $0\leq h,j\leq r-1$.
	\begin{enumerate}
		\item $\epsilon_{i,h}(X)^2=\epsilon_{i,h}(X)$ and $\epsilon_{i,j}(X)\epsilon_{i,h}(X)=0$ in $\sR$, when $j\neq h$. Also $\epsilon_{i,0}(X)+\epsilon_{i,1}(X)+...+\epsilon_{i,r-1}(X)=\epsilon_{i}(X)$ in $\sR$.
		\item $\sR=\sL_{0}+\sL_1+...+\sL_v=\sL_{0}\oplus\sL_1\oplus ...\oplus\sL_v$, and $\epsilon_i(X)$ is the multiplicative identity of $\sL_i$.
		\item $\sL_i=\sL_{i,0}+\sL_{i,1}+...+\sL_{i,r-1}=\sL_{i,0}\oplus\sL_{i,1}\oplus ...\oplus\sL_{i,r-1}$, and $\epsilon_{i,h}(X)$ is the multiplicative identity of $\sL_{i,h}$. 
		\item The map $\psi_i:R[X]/<m_i(X)>\rightarrow \sL_i$ given by $f(X)+<m_i(X)>\mapsto \epsilon_i(X)f(X)+<X^N-1>$ is a ring isomorphism and hence $\sL_i$ has a basis $\{\epsilon_i(X)X^j:j=0,1,...,\kappa_{j_i}-1\}$. Similarly, the map $\psi_{i,h}:R[X]/<m_{i,h}(X)>\rightarrow \sL_{i,h}$ given by $f(X)+<m_{i,h}(X)>\mapsto \epsilon_{i,h}(X)f(X)+<X^N-1>$ is a ring isomorphism and hence $\sL_{i,h}$ has a basis $\{\epsilon_{i,h}(X)X^j:j=0,1,...,\kappa_{j_i,h}-1\}$. 
	\end{enumerate}
\end{lemma}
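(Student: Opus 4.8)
The proof follows the same lines as that of Lemma \ref{epsilon_i}; the new points are merely that the coefficients now lie in $R$ rather than in $S$ and that we must track the finer cyclotomic cosets $C_{j_ip^h}^{(p^r)}$. The plan is to carry out every computation inside $\widehat{\sR}:=\widehat{R}[X]/<X^N-1>$ and then restrict the resulting identities to $\sR$, which embeds in $\widehat{\sR}$ because $X^N-1$ is monic (so division by it in $R[X]$ and in $\widehat{R}[X]$ coincide). Since $\gcd(N,p)=1$, the integer $N$ is a unit of $\widehat{R}$ and $\overline{\eta}$ is a primitive $N$-th root of unity in $\F_{p^{rs}}$, so $\eta^a-\eta^b$ is a unit of $\widehat{R}$ for $a\not\equiv b\pmod N$; consequently $X^N-1=\prod_{l=0}^{N-1}(X-\eta^l)$ with pairwise comaximal factors, and the Chinese Remainder Theorem gives a ring isomorphism $\widehat{\sR}\cong\prod_{l=0}^{N-1}\widehat{R}$ via evaluation at the $\eta^l$. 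Setting $e_l(X)=\frac{1}{N}\sum_{j=0}^{N-1}\eta^{-lj}X^j$, a finite Fourier computation shows $e_l^2=e_l$, $e_le_{l'}=0$ for $l\neq l'$, $\sum_l e_l=1$, $Xe_l=\eta^le_l$, and hence $(X-\eta^l)e_l=0$; moreover $e_l$ is the $l$-th standard idempotent under the above isomorphism. By definition $\epsilon_i(X)=\sum_{l\in C_{j_i}^{(p)}}e_l(X)$ and $\epsilon_{i,h}(X)=\sum_{l\in C_{j_ip^h}^{(p^r)}}e_l(X)$.

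Parts (1)--(3) then reduce to combinatorics of the cosets. For $u+1\leq i\leq v$ the sets $C_{j_i}^{(p^r)},C_{j_ip}^{(p^r)},\dots,C_{j_ip^{r-1}}^{(p^r)}$ are pairwise disjoint with union $C_{j_i}^{(p)}$, and the $C_{j_i}^{(p)}$ partition $\{0,1,\dots,N-1\}$. Expanding products of the orthogonal idempotents $e_l$ therefore yields $\epsilon_{i,h}^2=\epsilon_{i,h}$, $\epsilon_{i,j}\epsilon_{i,h}=0$ for $j\neq h$, and $\sum_{h=0}^{r-1}\epsilon_{i,h}=\epsilon_i$ in $\widehat{\sR}$; as $\epsilon_i\in S[X]$ and $\epsilon_{i,h}\in R[X]$, these are identities in $\sR$, giving (1). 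Combining $\sum_{i=0}^{v}\epsilon_i=1$ (Lemma \ref{epsilon_i}) with the idempotency and orthogonality of the $\epsilon_i$, and, inside each $\sL_i$, of the $\epsilon_{i,h}$, the standard decomposition of a module along a complete orthogonal system of idempotents gives $\sR=\bigoplus_{i=0}^v\sL_i$ and $\sL_i=\bigoplus_{h=0}^{r-1}\sL_{i,h}$, with $\epsilon_i$ (resp. $\epsilon_{i,h}$) the multiplicative identity of $\sL_i$ (resp. $\sL_{i,h}$): directness follows by multiplying a vanishing sum by the relevant idempotent, and the spanning property from $1$ (resp. $\epsilon_i$) being the sum of the constituent idempotents.

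For part (4), well-definedness of $\psi_i$ and $\psi_{i,h}$ comes from $(X-\eta^l)e_l=0$ together with $m_i(X)=\prod_{l\in C_{j_i}^{(p)}}(X-\eta^l)$ (resp. $m_{i,h}(X)=\prod_{l\in C_{j_ip^h}^{(p^r)}}(X-\eta^l)$), which force $m_i(X)\epsilon_i(X)=0$ and $m_{i,h}(X)\epsilon_{i,h}(X)=0$ in $\widehat{\sR}$, hence in $\sR$. They are ring homomorphisms because $\epsilon_i$, $\epsilon_{i,h}$ are idempotents carrying the identities of $\sL_i$, $\sL_{i,h}$, and surjective by the definitions $\sL_i=\epsilon_i(X)\sR$ and $\sL_{i,h}=\epsilon_{i,h}(X)\sR$. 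For injectivity, if $\epsilon_i(X)f(X)\equiv 0\pmod{X^N-1}$ with $f\in R[X]$, then under the evaluation isomorphism $f(\eta^l)=0$ for all $l\in C_{j_i}^{(p)}$, so $m_i\mid f$ in $\widehat{R}[X]$; since $m_i\in S[X]$ is monic, the remainder of $f$ on division by $m_i$ is the same in $R[X]$ as in $\widehat{R}[X]$, hence $m_i\mid f$ in $R[X]$, i.e. $f+<m_i>=0$ — and the same argument applies to $\psi_{i,h}$ using that $m_{i,h}$ is monic. Finally $R[X]/<m_i>$ and $R[X]/<m_{i,h}>$ are free over $R$ on the monomial bases $1,X,\dots,X^{\deg m_i-1}$ and $1,X,\dots,X^{\deg m_{i,h}-1}$, with $\deg m_i=\kappa_{j_i}$ and $\deg m_{i,h}=\kappa_{j_i,h}$, so transporting along $\psi_i$ and $\psi_{i,h}$ gives the claimed $R$-bases $\{\epsilon_i(X)X^j\}$ and $\{\epsilon_{i,h}(X)X^j\}$.

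The one step demanding genuine care — and the part I would flag as the main obstacle — is the injectivity in (4): over the chain ring $R$ one cannot appeal to unique factorization in $R[X]$, so one must instead use that $X^N-1$ is separable modulo the maximal ideal of $R$ (so that the linear factors $X-\eta^l$ split off cleanly over $\widehat{R}$) together with uniqueness of monic division in order to descend the divisibility $m_i\mid f$ from $\widehat{R}[X]$ to $R[X]$. Everything else is the formal idempotent bookkeeping already done for Lemma \ref{epsilon_i}, now carried out over $R$ and refined through the finer cosets $C_{j_ip^h}^{(p^r)}$.
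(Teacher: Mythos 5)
Your argument is correct and follows exactly the route the paper intends: the paper gives no written proof for this lemma, merely asserting that it ``can be proved directly by computations'' in analogy with Lemma \ref{epsilon_i}, and your write-up is precisely that computation carried out in detail (primitive idempotents $e_l$ via the CRT splitting of $X^N-1$ over $\widehat{R}$, coset combinatorics for (1)--(3), and monic division to descend $m_i\mid f$ from $\widehat{R}[X]$ to $R[X]$ for the injectivity in (4)). Nothing in your proposal deviates from or goes beyond what the authors tacitly rely on; it simply supplies the omitted details.
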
	

\begin{corollary}\label{cor:4.1}
In $\sR$, $\epsilon_i(X)\epsilon_{j,h}(X)=0$ when $i\neq j$, for all $0\leq i\leq v$, $u+1\leq j\leq v$ and $0\leq h\leq r-1$.
\end{corollary}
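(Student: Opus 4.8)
The plan is to reduce the claim to the orthogonality relations already recorded in Lemmas \ref{epsilon_i} and \ref{epsilon_ij}, rather than to expand the defining sums for the idempotents. The key observation is that each finer idempotent $\epsilon_{j,h}(X)$ is ``absorbed'' by the coarser idempotent $\epsilon_j(X)$, so the vanishing $\epsilon_i(X)\epsilon_{j,h}(X)=0$ follows from $\epsilon_i(X)\epsilon_j(X)=0$.

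First I would fix $j$ with $u+1\leq j\leq v$ and invoke part (1) of Lemma \ref{epsilon_ij}: we have $\epsilon_{j,0}(X)+\cdots+\epsilon_{j,r-1}(X)=\epsilon_j(X)$ in $\sR$, together with $\epsilon_{j,h}(X)^2=\epsilon_{j,h}(X)$ and $\epsilon_{j,l}(X)\epsilon_{j,h}(X)=0$ for $l\neq h$. Multiplying the identity $\sum_{l=0}^{r-1}\epsilon_{j,l}(X)=\epsilon_j(X)$ by $\epsilon_{j,h}(X)$ and using these two facts makes every cross term vanish, leaving $\epsilon_{j,h}(X)=\epsilon_{j,h}(X)\epsilon_j(X)$ in $\sR$.

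Next I would use the orthogonality of the top-level idempotents. By Lemma \ref{epsilon_i}(1), $\epsilon_i(X)\epsilon_j(X)=0$ in $\sS$ whenever $0\leq i,j\leq v$ and $i\neq j$, and since $\sS$ is a subring of $\sR$ this equality persists in $\sR$. Combining the two steps gives
\[
\epsilon_i(X)\,\epsilon_{j,h}(X)=\epsilon_i(X)\,\epsilon_j(X)\,\epsilon_{j,h}(X)=0\cdot\epsilon_{j,h}(X)=0
\]
in $\sR$ for all $0\leq i\leq v$, $u+1\leq j\leq v$, $0\leq h\leq r-1$ with $i\neq j$, which is exactly the assertion.

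There is essentially no obstacle here; the only point requiring a moment of care is that the relation $\epsilon_{j,h}(X)=\epsilon_{j,h}(X)\epsilon_j(X)$ is valid in the full ring $\sR$ rather than merely inside the subring $\sL_j$, which is immediate because all the identities in Lemma \ref{epsilon_ij}(1) are stated in $\sR$. One could alternatively argue directly by expanding the defining sums for $\epsilon_i(X)$ and $\epsilon_{j,h}(X)$, using $\eta^N=1$ and the disjointness of the cyclotomic cosets $C_{j_i}^{(p)}$ and $C_{j_jp^h}^{(p^r)}$ for $i\neq j$, but that computation is longer and less transparent than the idempotent argument above.
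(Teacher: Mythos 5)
Your argument is correct and uses the same ingredients as the paper's proof: the orthogonality of the $\epsilon_i(X)$ from Lemma \ref{epsilon_i}(1), the decomposition $\sum_{l}\epsilon_{j,l}(X)=\epsilon_j(X)$ together with idempotency and orthogonality from Lemma \ref{epsilon_ij}(1), and a final multiplication by $\epsilon_{j,h}(X)$. The paper merely rearranges the same computation (isolating $\epsilon_i(X)\epsilon_{j,h}(X)$ as minus a sum of cross terms before multiplying by $\epsilon_{j,h}(X)$), so your version, which first establishes $\epsilon_{j,h}(X)=\epsilon_j(X)\epsilon_{j,h}(X)$, is essentially the same proof and if anything slightly more transparent.
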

\begin{proof}
For any arbitrary $0\leq i\leq v$, $u+1\leq j\leq v$ and $0\leq h\leq r-1$, Lemma \ref{epsilon_i}(1) and Lemma \ref{epsilon_ij}(1) implies $$\epsilon_i(X)\epsilon_{j,h}(X)=-\sum_{l\neq j}\epsilon_i(X)\epsilon_{l,h}.$$ Multiplying both sides by $\epsilon_{j,h}(X)$, we obtain the statement of the corollary, by Lemma \ref{epsilon_ij}(1). 
\end{proof}
	
Clearly, $\sK_i =\sL_i\cap \sS$ for all $0\leq i\leq v$. However, when we define $\sK_{i,j} =\epsilon_{i,j}(X)\sS$, we obtain $\sK_{i,j} \neq \sL_{i,j}\cap \sS$, for all $u+1\leq i\leq v$ and $0\leq j\leq r-1$. The following lemma expresses the idea behind it in a precise way.   

\begin{lemma}\label{lem:sK-sL}
$\sK_{i,j}=\sL_{i,j}$ for all $u+1\leq i\leq v$ and $0\leq j\leq r-1$.
\end{lemma}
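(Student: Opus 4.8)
The plan is to get the easy inclusion for free and prove the reverse by exhibiting, for each fixed $i$ with $u+1\le i\le v$ and each $j$, a single element of $\sS$ that ``straightens'' $\epsilon_{i,j}(X)$ onto all of $\sL_{i,j}$. Since $\sS\subseteq\sR$ we have at once $\sK_{i,j}=\epsilon_{i,j}(X)\sS\subseteq\epsilon_{i,j}(X)\sR=\sL_{i,j}$, so the content is the inclusion $\sL_{i,j}\subseteq\sK_{i,j}$.

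I would first reduce this to one multiplicative statement. By the unique $\omega$-expansion of elements of $R$ over $S$ (recalled in Section~\ref{sec:fccr}), $\sR=\bigoplus_{a=0}^{r-1}\omega^{a}\sS$ as an $\sS$-module, hence $\sL_{i,j}=\epsilon_{i,j}(X)\sR=\sum_{a=0}^{r-1}\bigl(\omega^{a}\epsilon_{i,j}(X)\bigr)\sS$. Because $\epsilon_{i,j}(X)$ is idempotent in $\sR$ (Lemma~\ref{epsilon_ij}(1)), $\omega^{a}\epsilon_{i,j}(X)=\bigl(\omega\,\epsilon_{i,j}(X)\bigr)^{a}$ for $a\ge 1$. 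Thus it suffices to find $\theta\in\sS$ with $\omega\,\epsilon_{i,j}(X)=\theta\,\epsilon_{i,j}(X)$ in $\sR$: then $\omega^{a}\epsilon_{i,j}(X)=\theta^{a}\epsilon_{i,j}(X)\in\epsilon_{i,j}(X)\sS=\sK_{i,j}$ for every $a$ (the case $a=0$ being trivial), and therefore $\sL_{i,j}=\sum_{a}\bigl(\omega^{a}\epsilon_{i,j}(X)\bigr)\sS\subseteq\sK_{i,j}$.

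To build such a $\theta$ I would average $\omega$ over the Frobenius $\widehat\phi$ of Section~\ref{sec:addGal}, using the following of its properties: its fixed subring in $\sR$ is $\sS$; it satisfies $\widehat\phi(\omega)=\omega^{p}$, so $\widehat\phi^{\,r}(\omega)=\omega$ and each $\widehat\phi^{\,h}(\omega)$ lies in $R$; and it permutes the idempotents cyclically, $\widehat\phi(\epsilon_{i,h}(X))=\epsilon_{i,h+1}(X)$ with indices read modulo $r$ (this last uses $u+1\le i$, so that $p\cdot C_{j_ip^{h}}^{(p^{r})}=C_{j_ip^{h+1}}^{(p^{r})}$ and these $r$ cosets are genuinely distinct). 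Put $c=\widehat\phi^{\,r-j}(\omega)\in R$ and
$$\theta=\sum_{h=0}^{r-1}\widehat\phi^{\,h}(c)\,\epsilon_{i,h}(X)\in\sR.$$
Applying $\widehat\phi$ and reindexing the sum (legitimate because $\widehat\phi^{\,r}(c)=c$ and $\widehat\phi(\epsilon_{i,r-1}(X))=\epsilon_{i,0}(X)$) gives $\widehat\phi(\theta)=\theta$, hence $\theta\in\sS$. Multiplying by $\epsilon_{i,j}(X)$ and using orthogonality of the $\epsilon_{i,h}(X)$ together with $\epsilon_{i,j}(X)^{2}=\epsilon_{i,j}(X)$ (Lemma~\ref{epsilon_ij}(1)) collapses the sum to $\epsilon_{i,j}(X)\theta=\widehat\phi^{\,j}(c)\,\epsilon_{i,j}(X)=\omega\,\epsilon_{i,j}(X)$, which is exactly what was needed.

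I expect the only delicate point to be making $\widehat\phi$ do two things simultaneously: cyclically permute $\{\epsilon_{i,h}(X)\}_{h=0}^{r-1}$ with full period $r$, and act on $R$ (hence on the powers $\omega^{a}$) with $\widehat\phi^{\,r}=\mathrm{id}$ there, so that the weighted average $\theta$ both stays in the fixed ring $\sS$ and has $\omega$ (and not some conjugate of it) as the coefficient of $\epsilon_{i,j}(X)$ after projection. Establishing $\widehat\phi(\epsilon_{i,h}(X))=\epsilon_{i,h+1}(X)$ directly from the defining formula for $\epsilon_{i,h}(X)$, and identifying $\sS$ as the fixed ring of $\widehat\phi$ on $\sR$, are the two facts I would check carefully; the rest is bookkeeping with orthogonal idempotents.
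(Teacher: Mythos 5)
Your proof is correct, and it takes a genuinely different route from the paper's. The paper also begins from the decomposition $\sL_{i,h}=\sum_{j=0}^{r-1}\omega^j\sK_{i,h}$, but it then reduces the claim to $\omega\sK_{i,h}=\sK_{i,h}$ and establishes that by contradiction: if it failed, $\sL_{i_0}=\epsilon_{i_0}(X)\sR$ would admit a generating set over $\sK_{i_0}=\epsilon_{i_0}(X)\sS$ with more than $r$ ``independent'' elements, contradicting the fact that $\sL_{i_0}\cong R[X]/\langle m_{i_0}(X)\rangle$ is free of rank $r$ over $\sK_{i_0}\cong S[X]/\langle m_{i_0}(X)\rangle$. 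That counting argument is terse and leans on freeness/rank considerations that it does not spell out. Your argument instead exhibits an explicit witness: the $\widehat\phi$-invariant element $\theta=\sum_{h=0}^{r-1}\widehat\phi^{\,h}(c)\,\epsilon_{i,h}(X)$ with $\theta\,\epsilon_{i,j}(X)=\omega\,\epsilon_{i,j}(X)$, which makes the inclusion $\sL_{i,j}\subseteq\sK_{i,j}$ constructive and self-contained; the only inputs are the orthogonality and idempotency from Lemma~\ref{epsilon_ij}(1), the cyclic permutation $\phi(\epsilon_{i,h}(X))=\epsilon_{i,h+1 \bmod r}(X)$ (which the paper records, just after this lemma, as a directly verifiable property of $\phi$), and the identification of $\sS$ as the fixed ring of $\phi$ on $\sR$. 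On that last point, note that the paper's sentence ``$\widehat\phi(b)=b$ if and only if $b\in R$'' appears to be a typo for $b\in S$ (it is used in exactly that form to conclude $\epsilon_i(X)\in S[X]$ from $\widehat\phi(\epsilon_i(X))=\epsilon_i(X)$), so your reading is the consistent one. What your approach buys is rigor and an explicit description of how $\omega$ acts on each component $\sL_{i,j}$ as multiplication by an element of $\sS$; what the paper's buys is brevity. Both are valid.
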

\begin{proof}
Firstly notice that
\begin{equation}
\begin{array}{lll}
\sL_{i,h} & = & \epsilon_{i,h}(X)\sR \\
\ & = & 				\epsilon_{i,h}(X)\sum_{j=0}^{r-1}\omega^j\sS \\
\ & = & 				\sum_{j=0}^{r-1}\omega^j\epsilon_{i,h}(X)\sS \\
\ & = & 				\sum_{j=0}^{r-1}\omega^j\sK_{i,h},
\end{array}
\label{eq:sK-sL}
\end{equation}
for all $u+1\leq i\leq v$ and $0\leq j\leq r-1$. Equation (\ref{eq:sK-sL}) says that, $\sK_{i,j}=\sL_{i,j}$ if and only if $\sK_{i,h}=\omega\sK_{i,h}$. Hence, we will equivalently prove $\sK_{i,h}=\omega\sK_{i,h}$,  for all $u+1\leq i\leq v$ and $0\leq h\leq r-1$. 

Now, assume the contrary, i.e., assume $\sK_{i_0,h_0}\neq\omega\sK_{i_0,h_0}$ for some $u+1\leq i_0\leq v$ and $0\leq h_0\leq r-1$. Then, since $\epsilon_{i,j}(X)\neq\omega^l\epsilon_{i,h}(X)$ when $j\neq h$ and $1\leq l\leq r-1$, the equation (\ref{eq:sK-sL}) says that $\sL_{i_0}=\epsilon_{i_0}(X)\sR$ has a basis over $\sK_{i_0}=\epsilon_{i_0}(X)\sS $ including more than $r$ elements, this is a contradiction. Then, $\sK_{i,h}=\omega\sK_{i,h}$, for all $u+1\leq i\leq v$ and $0\leq j\leq r-1$.   
\end{proof}

It can be shown directly that $\phi$ satisfies the following, for all $0\leq i\leq v$ and $0\leq h\leq r-1$.
	\begin{enumerate}
		\item $\phi (\epsilon_i (X))=\epsilon_i (X)$ and $\phi (\epsilon_{i,h} (X))=\epsilon_{i,h+1\mod r} (X)$.
		\item $\phi (\sL_i)=\sL_i$ and $\phi (\sL_{i,h})=\sL_{i,h+1\mod r}$.
	\end{enumerate}

The following lemma is immediate.
\begin{lemma}\label{lem:cyclicness}	
Let $\sC$ be a non-empty subset of $\sR$. $\sC$ is an $S$-linear code over $R$ of length $N$ if and only if there is a unique $\sK_i$-submodule $\sC_i$ of $\sL_i$ such that $\sC =\bigoplus_{i=0}^{v}\sC_i$. Hence we have also $|\sC|=\prod_{i=0}^v|\sC_i|$. 
\end{lemma}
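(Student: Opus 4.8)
The plan is to read the decomposition of $\sC$ directly off the ring-direct-sum decompositions $\sS=\bigoplus_{i=0}^{v}\sK_i$ and $\sR=\bigoplus_{i=0}^{v}\sL_i$ already produced in Lemmas \ref{epsilon_i} and \ref{epsilon_ij}, together with the idempotent identities $\epsilon_i(X)\epsilon_j(X)=0$ for $i\neq j$, $\epsilon_i(X)^2=\epsilon_i(X)$ and $\sum_{i=0}^{v}\epsilon_i(X)=1$. First I would make the translation of the two sides into module language explicit. Since $\sC\subseteq\sR$, being an $S$-linear cyclic code means being an $S$-submodule of $\sR$ stable under multiplication by $X$, i.e.\ an $\sS$-submodule of $\sR$ (because $\sS=S[X]/\langle X^{N}-1\rangle$ is generated over $S$ by $X$). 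Likewise a $\sK_i$-submodule of $\sL_i$ is the same thing as an $S$-submodule of $\sL_i$ stable under the shift: on $\sL_i$, multiplication by $\epsilon_i(X)X\in\sK_i$ coincides with multiplication by $X$ because $\epsilon_i(X)$ acts as the identity of $\sL_i$, and $\sK_i=\epsilon_i(X)\sS$ is generated over $\epsilon_i(X)S$ by $\epsilon_i(X)X$.

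For the forward implication, assume $\sC$ is an $\sS$-submodule of $\sR$ and set $\sC_i=\epsilon_i(X)\sC$ for $0\leq i\leq v$. Since $\epsilon_i(X)\in\sS$ we get $\sC_i\subseteq\sC$, and $\sC_i\subseteq\epsilon_i(X)\sR=\sL_i$; moreover $\sK_i\sC_i=\epsilon_i(X)\sS\,\epsilon_i(X)\sC=\epsilon_i(X)\sS\sC\subseteq\epsilon_i(X)\sC=\sC_i$, so $\sC_i$ is a $\sK_i$-submodule of $\sL_i$. From $1=\sum_i\epsilon_i(X)$ we get $c=\sum_i\epsilon_i(X)c\in\sum_i\sC_i$ for every $c\in\sC$, hence $\sC=\sum_{i=0}^{v}\sC_i$, and this sum is direct because it sits inside the direct sum $\sR=\bigoplus_i\sL_i$. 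Conversely, given $\sK_i$-submodules $\sC_i\subseteq\sL_i$, the orthogonality $\epsilon_j(X)\sL_i=\epsilon_j(X)\epsilon_i(X)\sR=0$ for $j\neq i$ shows that $\sK_j$ annihilates $\sC_i$ when $j\neq i$; writing any $s\in\sS$ as $\sum_l\epsilon_l(X)s$ we find $s\sC_i=\epsilon_i(X)s\,\sC_i\subseteq\sK_i\sC_i\subseteq\sC_i$, so each $\sC_i$ is an $\sS$-submodule of $\sR$ and therefore so is $\sC=\bigoplus_i\sC_i$, i.e.\ it is an $S$-linear cyclic code.

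Uniqueness and the cardinality formula are then formalities. If $\sC=\bigoplus_{i=0}^{v}\sC_i'$ with each $\sC_i'\subseteq\sL_i$, applying $\epsilon_i(X)$ and using $\epsilon_i(X)\sL_j=0$ for $j\neq i$ while $\epsilon_i(X)$ is the identity on $\sL_i$ yields $\sC_i'=\epsilon_i(X)\sC=\sC_i$, so the components are forced; and $|\sC|=\prod_{i=0}^{v}|\sC_i|$ is immediate from the directness of $\sC=\bigoplus_i\sC_i$. I do not expect a genuine obstacle, since the whole content is already packaged in Lemmas \ref{epsilon_i} and \ref{epsilon_ij}; the only point that deserves care is the bookkeeping in the first paragraph, namely verifying that ``$\sK_i$-submodule of $\sL_i$'' (rather than merely ``$S$-submodule of $\sL_i$'') is exactly what is needed so that $\bigoplus_i\sC_i$ comes out stable under the cyclic shift.
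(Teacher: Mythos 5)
Your proof is correct, and it takes the route the paper intends: the paper declares this lemma ``immediate'' and gives no proof, the content being exactly the idempotent bookkeeping you carry out ($\sC_i=\epsilon_i(X)\sC$, orthogonality of the $\epsilon_i(X)$, and $\sum_i\epsilon_i(X)=1$). Your closing observation is also the right one to flag: the lemma's wording says ``$S$-linear code,'' but it must be read as ``$S$-linear \emph{cyclic} code,'' since stability under the shift is precisely what makes ``$\sK_i$-submodule'' (rather than ``$S$-submodule'') the correct notion on each component.
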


Let $\sC$ be a code and $\sC_i$ be given as in Lemma \ref{lem:cyclicness}. The unique decomposition $\sC =\bigoplus_{i=0}^{v}\sC_i$ is called \textit{canonical decomposition} of $\sC$.  	
	
	
\textit{Euclidean inner product} of $a (X) =a_0+a_1X+...+a_{N-1}X^{N-1}$ and $b (X) =b_0+b_1X+...+b_{N-1}X^{N-1}$ in $\sR$ is naturally given by $(a(X),b(X))_E=\sum_{i=0}^{N-1}a_ib_i$. Accordingly, \textit{Euclidean orthogonal code} $\sC^{\perp_E}$ of a code $\sC$ is given by $\sC^{\perp_E}=\{a\in \sR:(a (X),b (X))_E=0\text{ for all }b (X)\in \sC\}$. 

Let $Tr:R\rightarrow S$ be given by $c\mapsto \sum_{j=0}^{r-1}\phi^j(c)$. $Tr$ is called the\textit{ generalized trace} of $R$ relative to $S$. \textit{Trace inner product} of $\sR$ over $S$ is defined by $(a(X),b(X))_{Tr}=\sum_{j=0}^{N-1}Tr(a_ib_i)$ for $a (X) =a_0+a_1X+...+a_{N-1}X^{N-1},b (X) =b_0+b_1X+...+b_{N-1}X^{N-1}\in \sR$. Accordingly, \textit{Trace orthogonal code} $\sC^{\perp}$ of a code $\sC$ is given by $\sC^{\perp}=\{a\in \sR:(a (X),b(X))_{Tr}=0\text{ for all }b (X)\in \sC\}$.   
	
Let $\mu:\sR\rightarrow \sR$ be given by $a(X)\mapsto X^Na(X^{-1})$ for all $a(X)\in \sR$. Clearly, $\mu$ is an automorphism on $\sR$ of order 2. Also $\mu$ can be used as the permutation on $\{0,1,...,v\}$ given by $i\mapsto i'$ such that $C_{j_{i'}}=C_{-j_{i}}$. The following lemma is straightforward. 

\begin{lemma} The map $\mu$ defined in the previous paragraph has the following properties.
	\begin{enumerate}
		\item $\mu (\epsilon_i(X))=\epsilon_{\mu (i)}(X)$ in $\sR$ for all $0\leq i\leq v$.
		\item $\mu (0)=0$, $1\leq \mu (i)\leq u$ for $1\leq i\leq u$, and $u+1\leq \mu (i)\leq v$ for $u+1\leq i\leq v$.
	\end{enumerate}
In addition, $\mu$ also determines a ring isomorphism $R_i\rightarrow R_{\mu (i)}$ given by $f(X)\mapsto \mu (f(X))\mod m_{\mu (i)}(X)$. Hence, $\mu (\sK_i)=\sK_{\mu (i)}$ and $\mu (\sL_i)=\sL_{\mu (i)}$ for all $0\leq i\leq v$.
\end{lemma}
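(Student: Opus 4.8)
The plan is to verify claim~(1) by a direct manipulation of the explicit idempotent $\epsilon_i(X)$, and then to obtain claim~(2) and the remaining assertions as formal consequences. For~(1), I would start from
\[
\epsilon_i(X)=\frac1N\sum_{j=0}^{N-1}\sum_{l\in C_{j_i}^{(p)}}\eta^{-jl}X^j
\]
and apply $\mu$. Since $X^N=1$ in $\sR$ we have $\mu(X^j)=X^{(-j)\bmod N}$, so reindexing $j\mapsto(-j)\bmod N$ (a bijection of $\{0,\dots,N-1\}$, using $\eta^N=1$) gives $\mu(\epsilon_i(X))=\frac1N\sum_{j}\sum_{l\in C_{j_i}^{(p)}}\eta^{-j(-l)}X^j$. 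Then I would reindex the inner sum by $l\mapsto(-l)\bmod N$; this bijection carries $C_{j_i}^{(p)}$ onto $C_{-j_i}^{(p)}=C_{j_{\mu(i)}}^{(p)}$, because negation commutes with multiplication by $p$ and $\mu(i)$ was defined by $C_{j_{\mu(i)}}=C_{-j_i}$. The outcome is exactly $\epsilon_{\mu(i)}(X)$; here the factor $1/N$ is legitimate since $\gcd(N,p)=1$ makes $N$ a unit in $\sR$.

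For~(2), first note that $C_{-j_0}^{(p)}=C_0^{(p)}=C_{j_0}^{(p)}$ forces $\mu(0)=0$, and that $\mu$ is a permutation of $\{0,\dots,v\}$. The same bijection $l\mapsto(-l)\bmod N$ of $\Z/N\Z$ sends $C_{j_i}^{(p)}$ onto $C_{j_{\mu(i)}}^{(p)}$, so $\kappa_{j_{\mu(i)}}=|C_{j_{\mu(i)}}^{(p)}|=|C_{j_i}^{(p)}|=\kappa_{j_i}$ and hence $\gcd(\kappa_{j_{\mu(i)}},r)=\gcd(\kappa_{j_i},r)$. Since $r$ is prime, an index $i\geq 1$ lies in $\{1,\dots,u\}$ or in $\{u+1,\dots,v\}$ according as $\gcd(\kappa_{j_i},r)=1$ or $r\mid\kappa_{j_i}$; as $\mu$ preserves this dichotomy and is injective with $\mu(0)=0$, it maps each of these two blocks into itself, which is exactly~(2).

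For the final assertions I would use that $\mu$ is a ring automorphism of $\sR$ restricting to one of $\sS$ and fixing $S$ pointwise. By~(1) it sends the idempotent $\epsilon_i(X)$ to $\epsilon_{\mu(i)}(X)$, so it restricts to ring isomorphisms $\sL_i=\epsilon_i(X)\sR\to\epsilon_{\mu(i)}(X)\sR=\sL_{\mu(i)}$ and $\sK_i=\epsilon_i(X)\sS\to\epsilon_{\mu(i)}(X)\sS=\sK_{\mu(i)}$, which are the two displayed equalities. Conjugating by the isomorphisms $\psi_i$ and $\psi_{\mu(i)}$ of Lemmas~\ref{epsilon_i}(3) and~\ref{epsilon_ij}(4) produces a ring isomorphism $R_i\to R_{\mu(i)}$; chasing a coset $f(X)+\langle m_i(X)\rangle$ through $\psi_i$, then $\mu$, then $\psi_{\mu(i)}^{-1}$ and using $\mu(\epsilon_i(X)f(X))=\epsilon_{\mu(i)}(X)\,\mu(f(X))$, one reads off the claimed formula $f(X)\mapsto\mu(f(X))\bmod m_{\mu(i)}(X)$, which makes sense because $X$ is invertible modulo $m_{\mu(i)}(X)$, its roots $\eta^j$ with $j\in C_{-j_i}^{(p)}$ being nonzero.

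The one step that needs care is the double reindexing in~(1): one must check that replacing $j$ by $-j$ and then $l$ by $-l$ leaves the shape of the double sum unchanged while turning the cyclotomic coset $C_{j_i}^{(p)}$ into $C_{j_{\mu(i)}}^{(p)}$. Once this is in hand the rest is purely formal, resting only on~(1) together with the idempotent decompositions of Lemmas~\ref{epsilon_i} and~\ref{epsilon_ij}.
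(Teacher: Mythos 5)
Your proof is correct: the double reindexing $j\mapsto(-j)\bmod N$, $l\mapsto(-l)\bmod N$ does carry $\epsilon_i(X)$ to $\epsilon_{\mu(i)}(X)$, the preservation of $\kappa_{j_i}$ under negation gives part (2), and the remaining claims follow formally from (1) together with Lemmas \ref{epsilon_i} and \ref{epsilon_ij}. The paper labels this lemma ``straightforward'' and omits the proof entirely, and your computation is exactly the direct verification the authors intend.
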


The following lemma is one of the main arguments we use to prove our main theorem in this section. 
\begin{lemma}\label{lem:mu}
In $\sR$, $a(X)\mu (b(X))=0$ if and only if $(a(X),X^hb(X))_{E}=0$ for all $0\leq h\leq N-1$. 
\end{lemma}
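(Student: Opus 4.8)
The statement is an "orthogonality $\Leftrightarrow$ annihilation" identity, and the standard route is to compute the coefficients of the product $a(X)\mu(b(X))$ in $\sR = R[X]/\langle X^N-1\rangle$ explicitly and recognize them as cyclic shifts of Euclidean inner products. First I would write $a(X)=\sum_{i=0}^{N-1}a_iX^i$ and $b(X)=\sum_{j=0}^{N-1}b_jX^j$, so that by definition $\mu(b(X)) = X^N b(X^{-1}) = \sum_{j=0}^{N-1} b_j X^{N-j}$, and in $\sR$ this equals $\sum_{j=0}^{N-1} b_j X^{-j}$ (indices mod $N$). Multiplying, the coefficient of $X^{-h}$ (equivalently $X^{N-h}$) in $a(X)\mu(b(X))$ is $\sum_{i-j \equiv -h} a_i b_j = \sum_{i=0}^{N-1} a_i b_{i+h}$, where the subscript $i+h$ is read modulo $N$.

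The key observation is then that $\sum_{i=0}^{N-1} a_i b_{i+h}$ is exactly the Euclidean inner product $(a(X), X^{-h} b(X))_E$, or equivalently, reindexing, the Euclidean inner product of $a(X)$ with a cyclic shift of $b(X)$. Since $h$ ranges over a complete residue system mod $N$, the collection $\{\,(a(X), X^h b(X))_E : 0 \le h \le N-1\,\}$ is (up to reindexing $h \mapsto -h$, which is a bijection of $\Z/N\Z$) the same as the collection of all coefficients of $a(X)\mu(b(X))$. Therefore $a(X)\mu(b(X)) = 0$ in $\sR$ if and only if every one of these coefficients vanishes, if and only if $(a(X), X^h b(X))_E = 0$ for all $0 \le h \le N-1$. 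This is precisely the claimed equivalence.

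The only point requiring care — and the place I would be most careful about — is the bookkeeping of indices modulo $N$ and the direction of the shift: one must make sure that the map $\mu$ sends $X^j \mapsto X^{N-j}$ correctly in the quotient (where $X^N = 1$ and $X$ is a unit with inverse $X^{N-1}$), and that the reindexing $h \mapsto -h \bmod N$ used to pass between "coefficient of $X^{-h}$" and "the $h$-th cyclic shift" really is a permutation of $\{0,1,\dots,N-1\}$, which it is since $\gcd(N,p)=1$ in particular $N\ge 1$ and negation is always a bijection of $\Z/N\Z$. No deeper structure of $R$, $S$, or the idempotents $\epsilon_i$ is needed here; the lemma is purely formal manipulation in the group algebra $R[\Z/N\Z]$, and once the coefficient formula is written down the equivalence is immediate.
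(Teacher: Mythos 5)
Your proposal is correct and follows essentially the same route as the paper: expand $a(X)\mu(b(X))$ coefficient by coefficient and identify each coefficient as a Euclidean inner product of $a(X)$ with a cyclic shift of $b(X)$. The only cosmetic difference is that the paper indexes the coefficient of $X^{h}$ directly as $(a(X),X^{h}b(X))_E$ while you index the coefficient of $X^{-h}$ and then invoke the bijection $h\mapsto -h$ of $\Z/N\Z$; the two bookkeeping conventions yield the same equivalence.
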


\begin{proof} Let $a(X)=\sum_{i=0}^{N-1}a_iX^i$ and $b(X)=\sum_{i=0}^{N-1}b_iX^i$. Then
$$
\begin{array}{lll}
a(X)\mu (b(X)) & = & \sum_{i=0}^{N-1}\sum_{j=0}^{N-1} a_ib_jX^{i-j\mod N} \\	
\ & = & \sum_{i=0}^{N-1}\sum_{h=0}^{N-1} a_ib_{i-h\mod N}X^{h} \\	
\ & = & \sum_{h=0}^{N-1} \left(\sum_{i=0}^{N-1} a_ib_{i-h\mod N} \right)X^{h} \\	
\ & = & \sum_{h=0}^{N-1} \left(a(X),X^hb(X)\right)_EX^{h} .	
\end{array}
$$
\end{proof}

Recall that $\{1,\omega,...,\omega^{r-1}\}$ is an $S$-basis of $R$. Let  $\{\theta_0,\theta_1,...,\theta_{r-1}\}$ be another $S$-basis of $R$. These two basis are called \textit{trace dual} of each other if $T(\omega^j\theta_j)=1$ and $T(\omega^j\theta_h)=0$ for all $0\leq j\neq h\leq r-1$. When a basis is given, its trace dual can be constructed by the following lemma.

\begin{lemma}
Let $\gamma'(X)=\frac{\gamma (X)}{X-\omega}=\sum_{j=0}^{r-1}\gamma_jX^j\in R[X]$, where $\gamma(X)$ is the primitive irreducible polynomial of $\omega$ over $S$. Let also $\theta_j=\frac{\gamma_j}{\gamma'(\omega)}\in R$ for $0\leq j\leq r-1$. Then $\{\theta_0,...,\theta_{r-1}\}$ is the trace dual of $\{1,\omega,...,\omega^{r-1}\}$.
\end{lemma}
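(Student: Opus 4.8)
The plan is to carry out the classical Euler/Lagrange interpolation computation for the trace dual of a power basis, paying attention to the fact that the interpolation nodes now lie in the chain ring $\widehat R$ (in fact in $R$) rather than in a field. First I would identify the conjugates of $\omega$. Since $\phi=\widehat\phi|_R$ fixes $S$ pointwise and has order $r$, and $\gamma(X)\in S[X]$ is the monic degree-$r$ minimal polynomial of $\omega$ over $S$, applying $\phi^i$ to $\gamma(\omega)=0$ shows that $\omega,\phi(\omega),\dots,\phi^{r-1}(\omega)$ are roots of $\gamma$ in $R$. Reducing modulo the maximal ideal of $R$, their images are roots of the reduction $\overline\gamma$, which is the minimal polynomial of $\overline\omega$ over $\F_p$ and hence separable of degree $r$; so the images $\overline{\phi^i(\omega)}$ are pairwise distinct, and therefore $\phi^i(\omega)-\phi^j(\omega)$ is a unit of $R$ for $i\neq j$. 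Consequently $\gamma(X)=\prod_{i=0}^{r-1}(X-\phi^i(\omega))$, the quantity $\gamma'(\omega)=\prod_{i\neq 0}(\omega-\phi^i(\omega))$ is a unit, and each $\theta_h=\gamma_h/\gamma'(\omega)$ is a well-defined element of $R$.

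Next I would establish the Lagrange interpolation identity over $R$: for every $h(X)\in R[X]$ of degree $\le r-1$,
\[
h(X)=\sum_{i=0}^{r-1}h(\phi^i(\omega))\,\frac{\gamma(X)/(X-\phi^i(\omega))}{\gamma'(\phi^i(\omega))},
\]
where $\gamma'(\phi^i(\omega))$ denotes the value at $X=\phi^i(\omega)$ of the polynomial $\gamma(X)/(X-\phi^i(\omega))$. Both sides are polynomials of degree $\le r-1$ agreeing at the $r$ points $\phi^i(\omega)$, and since the pairwise differences of these points are units one can divide off the linear factors $X-\phi^i(\omega)$ one at a time (absorbing each resulting unit), so a polynomial of degree $\le r-1$ vanishing at all of them must be $0$; hence the two sides coincide.

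Finally I would apply this with $h(X)=X^l$ for $0\le l\le r-1$ and compare the coefficient of $X^h$ on both sides. On the left it is $\delta_{l,h}$. On the right, since $\gamma(X)$ has coefficients in $S$, applying $\phi^i$ coefficientwise to $\gamma'(X)=\gamma(X)/(X-\omega)=\sum_h\gamma_hX^h$ yields $\gamma(X)/(X-\phi^i(\omega))=\sum_h\phi^i(\gamma_h)X^h$ and, on evaluating at $X=\phi^i(\omega)$, $\gamma'(\phi^i(\omega))=\phi^i(\gamma'(\omega))$; thus the coefficient of $X^h$ in the $i$-th summand is $\phi^i(\omega)^l\cdot\phi^i(\gamma_h)/\phi^i(\gamma'(\omega))=\phi^i(\omega^l\theta_h)$. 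Summing over $i$ and using $Tr(c)=\sum_{i=0}^{r-1}\phi^i(c)$ gives $Tr(\omega^l\theta_h)=\delta_{l,h}$, which is exactly the statement that $\{\theta_0,\dots,\theta_{r-1}\}$ is the trace dual of $\{1,\omega,\dots,\omega^{r-1}\}$. I expect the only point requiring real care to be the factor-theorem/interpolation step in the middle paragraph over the non-field ring $R$, where the unit-ness of the differences $\phi^i(\omega)-\phi^j(\omega)$ is precisely what makes the argument work; everything else is formal, resting on the $S$-linearity of $Tr$ and on $\phi$ fixing $S$.
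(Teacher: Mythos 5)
Your proposal is correct and is exactly the argument the paper has in mind: its proof is the one-line remark that the lemma ``can be proved directly by the interpolation idea of polynomials,'' and you have carried out precisely that Lagrange/Euler interpolation computation, correctly supplying the one nontrivial point over the non-field ring $R$ (that the differences $\phi^i(\omega)-\phi^j(\omega)$ are units because their reductions modulo the maximal ideal are the distinct conjugates of $\overline{\omega}$).
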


\begin{proof}
It can be proved directly by the interpolation idea of polynomials.  
\end{proof}

Now we give one of the main results in this paper. Note that, as we are in a Galois extension, trace orthogonality can be translated one to one to duality \cite[Lemma 6]{SW2004} thus we will use the term of trace duality.
\begin{theorem}\label{thm:main1}
Consider the definitions and notations given above. Any $S$-linear cyclic code $\sC\subseteq\sR$ over $R$ of length $N$ is of the form
\begin{equation}
\sC =\sum_{i=0}^{u}\sum_{j=0}^{r-1}x^{e_{i,j}}\omega^j\sK_{i} 
+\sum_{i=u+1}^{v}\sum_{j=0}^{r-1} x^{e_{i,j}}\sK_{i,j},
\label{eq:sC}
\end{equation}
for some $0\leq e_{i,j}\leq m-1$ given for $0\leq i\leq v$ and $0\leq j\leq r-1$. 

On the other hand, the trace dual code of $\sC$ given in (\ref{eq:sC}) is of the form 
\begin{equation}
\sC^{\perp} =\sum_{i=0}^{u}\sum_{j=0}^{r-1}x^{m-e_{i,j}}\theta_j\sK_{i} 
+\sum_{i=u+1}^{v}\sum_{j=0}^{r-1} x^{m-e_{i,j}}\sK_{i,j},
\label{eq:sCperp}
\end{equation}
where $Tr(\omega^i\theta_i)=1$ and $Tr(\omega^i\theta_j)=0$ for all $0\leq i\neq j\leq r-1$. 
\end{theorem}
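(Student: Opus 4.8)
The plan is to prove both assertions ``coordinatewise'' along the decomposition of Lemma \ref{lem:cyclicness}. An $S$-linear cyclic code is exactly an $\sS$-submodule of $\sR$, and since $\sS=\bigoplus_{i=0}^v\sK_i$ and $\sR=\bigoplus_{i=0}^v\sL_i$ are compatible decompositions into ideals cut out by the orthogonal idempotents $\epsilon_i(X)$ (Lemmas \ref{epsilon_i} and \ref{epsilon_ij}), such a code is precisely $\sC=\bigoplus_{i=0}^v\sC_i$ with each $\sC_i$ an arbitrary $\sK_i$-submodule of $\sL_i$. Moreover $\sC^{\perp}$ is again cyclic, since $(Xa(X),b(X))_E=(a(X),X^{-1}b(X))_E$ and hence $\sC^{\perp}$ inherits shift-invariance from $\sC$; so it decomposes the same way. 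Thus everything reduces to (a) describing the $\sK_i$-submodules of $\sL_i$, and (b) computing, inside each $\sL_i$, the $\sK_i$-trace annihilator of such a submodule.

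For (a) I would first record the ring-theoretic nature of the pieces. When $0\le i\le u$ one has $\gcd(\kappa_{j_i},r)=1$, so $m_i$ stays basic irreducible over $R$ and $\sL_i\cong R[X]/\langle m_i(X)\rangle$ is the unramified degree-$r$ extension of the chain ring $\sK_i\cong S[X]/\langle m_i(X)\rangle$; in particular $\sL_i$ is itself a chain ring with the same maximal ideal $\langle x\rangle$ and nilpotency index $m$, free over $\sK_i$ on $\{1,\omega,\dots,\omega^{r-1}\}$. When $u+1\le i\le v$ one has $r\mid\kappa_{j_i}$, so by Hensel (Theorem \ref{thm:hen}) $\overline{m_i}$ splits over $R$ into $r$ conjugate factors, $\sL_i=\sL_{i,0}\oplus\cdots\oplus\sL_{i,r-1}$ with $\sL_{i,j}=\sK_{i,j}$ (Lemma \ref{lem:sK-sL}), each $\sK_{i,j}$ a chain ring isomorphic to $\sK_i$, and $\phi$ cyclically permuting the summands; so $\sL_i$ is again free of rank $r$ over $\sK_i$, this time on $\{\epsilon_{i,0}(X),\dots,\epsilon_{i,r-1}(X)\}$. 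The core of the classification is then to show that every $\sK_i$-submodule $\sC_i$ is ``diagonal'' with respect to the relevant basis, i.e. $\sC_i=\bigoplus_{j=0}^{r-1}x^{e_{i,j}}\omega^j\sK_i$ in the first case and $\sC_i=\bigoplus_{j=0}^{r-1}x^{e_{i,j}}\sK_{i,j}$ in the second, for integers $0\le e_{i,j}\le m$. I would attack this by analysing $\sC_i$ through the $x$-adic filtration $\sL_i\supseteq x\sL_i\supseteq\cdots\supseteq x^m\sL_i=0$ and the $\sK_i/\langle x\rangle$-subspaces it induces layer by layer, using that the ideals of the chain ring $\sK_i$ are totally ordered (so each layer is cut out by a single exponent) and, in the split case, using the $\phi$-symmetry to align the $r$ components. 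Assembling over $i$ yields \eqref{eq:sC}.

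For (b) I would use Lemma \ref{lem:mu}: the coefficient of $X^h$ in $a(X)\mu(b(X))$ is $(a(X),X^hb(X))_E$, hence $(a(X),X^hb(X))_{Tr}$ is the $S$-trace of that coefficient, and for cyclic $\sC$ this gives $\sC^{\perp}=\{a(X)\in\sR:\ Tr(a(X)\mu(b(X)))=0\text{ in }\sS\text{ for all }b(X)\in\sC\}$, where $Tr:\sR\to\sS$ is the coefficientwise generalized trace. Since $\mu$ sends $\sL_i$ to $\sL_{\mu(i)}$ and distinct $\sL_i$'s mutually annihilate, this splits into a condition inside each $\sL_i$: writing $\mu(\sC)=\bigoplus_i\sD_i$ with $\sD_i=\mu(\sC_{\mu(i)})$, an element $a_i\in\sL_i$ lies in $(\sC^{\perp})_i$ iff $Tr_{\sL_i/\sK_i}(a_i\sD_i)=0$. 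For $0\le i\le u$ I would invoke the trace-dual basis $\{\theta_0,\dots,\theta_{r-1}\}$ of $\{1,\omega,\dots,\omega^{r-1}\}$ constructed in the lemma above (so that $\{\epsilon_i(X)\theta_j\}$ is $Tr_{\sL_i/\sK_i}$-dual to $\{\omega^j\}$): for $z=\sum_l c_l\theta_l$ one computes $Tr(z\,x^{e}\omega^j)=x^{e}c_j$, so the trace annihilator of $x^{e}\omega^j\sK_i$ is $x^{m-e}\theta_j\sK_i$, using $\mathrm{Ann}_{\sK_i}(x^e)=x^{m-e}\sK_i$; since $\mu$ fixes $\omega$ and $x$ and merely swaps $\sK_{\mu(i)}$ with $\sK_i$, summing over $j$ produces the first sum of \eqref{eq:sCperp} (with exponents $m-e_{\mu(i),j}$, which is $m-e_{i,j}$ after relabelling the cosets, and literally $m-e_{i,j}$ for self-reciprocal cosets). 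For $u+1\le i\le v$ the idempotent basis $\{\epsilon_{i,j}(X)\}$ is self-dual up to units for the trace pairing and $\mu$ matches each $\sL_{i,j}$ with the appropriate summand, so the annihilator of $x^{e}\sK_{i,j}$ is simply $x^{m-e}\sK_{i,j}$, giving the second sum.

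I expect the main obstacle to be step (a): forcing an arbitrary $\sK_i$-submodule of the rank-$r$ free module $\sL_i$ into diagonal form. For genuinely $R$-linear cyclic codes the analogue is only the classification of ideals of a chain ring and is immediate; here the submodules are merely $\sK_i$-linear, so the argument must really exploit the extra rigidity of $\sL_i$ being a chain ring (all ideals comparable) together with the cyclic $\phi$-symmetry in the split case, in order to make the exponents $e_{i,j}$ well defined and the direct-sum decomposition exact. Once (a) is secured, the remainder --- the decompositions of $\sC$ and $\sC^{\perp}$, Lemma \ref{lem:mu}, and the trace-dual basis computation in (b) --- is essentially bookkeeping.
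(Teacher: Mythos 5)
Your overall architecture --- canonical decomposition of $\sR$ via the orthogonal idempotents $\epsilon_i(X)$, classification of the components $\sC_i$ of Lemma \ref{lem:cyclicness}, then a trace-dual-basis computation for $\sC^{\perp}$ --- is the same as the paper's, and your part (b) is actually more explicit than the paper's (which proves $\sD\subseteq\sC^{\perp}$ and settles the reverse inclusion by ``similar arguments''; your direct computation of the trace annihilator of $x^{e}\omega^{j}\sK_i$ as $x^{m-e}\theta_j\sK_i$, and your attention to $\mu(i)$ versus $i$, are improvements). The genuine gap is exactly the step you flag as ``the main obstacle'' and then leave as a plan: that every $\sK_i$-submodule $\sC_i$ of $\sL_i$ is diagonal, i.e. $\sC_i=\bigoplus_j x^{e_{i,j}}\omega^j\sK_i$ (resp. $\bigoplus_j x^{e_{i,j}}\sK_{i,j}$). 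No filtration argument can deliver this, because a $\sK_i$-submodule of the free rank-$r$ module $\sL_i=\bigoplus_j\omega^j\sK_i$ need not be the direct sum of its intersections with the lines $\omega^j\sK_i$. The simplest instance is $i=0$, where $\sK_0\cong S$ and $\sL_0\cong R=S[\omega]$: the module $(1+\omega)\sK_0$ is a $\sK_0$-submodule (hence, padded with zeros in the other components, an $S$-linear cyclic code by Lemma \ref{lem:cyclicness}); it is free of rank one because $1+\omega$ is a unit of $R$, yet it meets every line $\omega^j\sK_0$ trivially, so it is not of the form $\sum_j x^{e_j}\omega^j\sK_0$. The total ordering of the ideals of the chain ring $\sK_i$ classifies the submodules of each single line, not of their rank-$r$ direct sum; already in the residue case $m=1$ the $\F_p$-subspaces of $\F_{p^r}$ are far more numerous than the coordinate subspaces. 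The same objection applies in the split case, since the idempotents $\epsilon_{i,j}(X)$ lie in $\sR$ but not in $\sS$, so a $\sK_i$-submodule of $\sL_i$ need not be stable under the projections onto the $\sK_{i,j}$.

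You should also know that the paper's own proof does not bridge this step either: it asserts only that ``any $S$-linear subset of $\omega^j\sK_i$ (or $\sK_{i,j}$) is an ideal $x^{e_{i,j}}\omega^j\sK_i$ (or $x^{e_{i,j}}\sK_{i,j}$) of itself,'' which again classifies submodules of a single line and not of the direct sum. What the cited lemmas actually give is that each $\sC_i$ is an arbitrary $\sK_i$-submodule of a free rank-$r$ $\sK_i$-module; classifying those requires triangular generator-matrix normal forms with off-diagonal entries, as in \cite{CGFC2015}, not just a diagonal exponent vector $(e_{i,0},\dots,e_{i,r-1})$. So either the class of codes must be restricted to those splitting along the chosen basis, or the normal form (\ref{eq:sC}) must be enlarged; your part (b) is sound for codes that do have the form (\ref{eq:sC}), but it cannot repair part (a).
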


\begin{proof}
By Lemmas \ref{epsilon_i}, \ref{epsilon_ij}, \ref{cor:4.1} and \ref{lem:sK-sL}, the canonical decomposition of $\sR$ is given by 
$$ 
\sR =\sum_{i=0}^{u}\sum_{j=0}^{r-1}\omega^j\sK_{i} +\sum_{i=u+1}^{v}\sum_{j=0}^{r-1}\sK_{i,j}
 =\bigoplus_{i=0}^{u}\bigoplus_{j=0}^{r-1}\omega^j\sK_{i} \oplus\bigoplus_{i=u+1}^{v}\bigoplus_{j=0}^{r-1}\sK_{i,j}. 
$$  
Notice that any $S$-linear subset of $\omega^j\sK_i$ (or $\sK_{i,j}$) is an ideal $x^{e_{i,j}}\omega^j\sK_i$ (or $x^{e_{i,j}}\sK_{i,j}$) of itself, for some $0\leq e_{i,j}\leq m-1$ given for $0\leq i\leq v$ and $0\leq j\leq r-1$. Therefore, the construction of $\sC$ is clear by Lemma \ref{lem:cyclicness}. 

Now, let 
$$
\sD =\sum_{i=0}^{u}\sum_{j=0}^{r-1}x^{m-e_{i,j}}\theta_j\sK_{i} 
+\sum_{i=u+1}^{v}\sum_{j=0}^{r-1}x^{m-e_{i,j}}\sK_{i,j},
$$ 
where $Tr(\omega^i\theta_i)=1$ and $Tr(\omega^i\theta_j)=0$ for all $0\leq i\neq j\leq r-1$. Also let $a(X)\in\sC$ and $b(X)\in\sD$ be arbitrary. From the properties Lemma \ref{epsilon_i}(1), Lemma \ref{epsilon_ij}(1), Corollary \ref{cor:4.1} and the trace duality between $\{1,\omega,...,\omega^{r-1}\}$ and $\{\theta_0,\theta_1,...,\theta_{r-1}\}$, we deduce 
$$
\sum_{l=0}^{r-1}\phi^l (a(X)b(X))=0, 
$$ 
which implies $(a(X),b(X))_{Tr}=0$ by Lemma \ref{lem:mu}. Therefore, $\sD\subseteq\sC^{\perp}$. Now, let $b'(X)\in\sR$ but $b'(X)\notin\sD$. Then similar arguments above can be used to show $(a(X),b'(X))_{Tr}\neq 0$ for some $a(X)\in\sC$. Therefore, $\sC^{\perp}\subseteq\sD$. Conclusively, $\sC^{\perp}=\sD$.  
\end{proof}

\begin{remark}
The special case of Theorem \ref{thm:main1} for $t=k=1$ corresponds to the construction in \cite{CGFC2015}. However, we use a different language without mentioning generator matrices of some subcodes over $S$. 
\end{remark}

As expected for a dual code we can also deduce the following results from Theorem \ref{thm:main1} related to cardinality and self-duality. 
\begin{corollary}\label{cor:card}
Let $\sC$ and $\sC^{\perp}$ be given as in Theorem \ref{thm:main1}. Then we obtain the equation
$$
\log_p|\sC|+\log_p|\sC^{\perp}|=\log_p|\sR|.
$$
\end{corollary}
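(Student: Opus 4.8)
The plan is to read both cardinalities off the explicit descriptions (\ref{eq:sC}) and (\ref{eq:sCperp}) and to organise the count along the canonical decomposition. Every summand of $\sC$ in (\ref{eq:sC}) lies in the corresponding $\sL_i$, and the same is true for $\sC^{\perp}$ in (\ref{eq:sCperp}) since $\sK_{i,j}=\sL_{i,j}\subseteq\sL_i$ by Lemma \ref{lem:sK-sL}; hence $\sC=\bigoplus_{i=0}^v\sC_i$ and $\sC^{\perp}=\bigoplus_{i=0}^v\sC_i^{\perp}$ with $\sC_i,\sC_i^{\perp}\subseteq\sL_i$. Therefore $|\sC|=\prod_i|\sC_i|$ and $|\sC^{\perp}|=\prod_i|\sC_i^{\perp}|$, and it suffices to prove the coordinatewise identity $|\sC_i|\cdot|\sC_i^{\perp}|=|\sL_i|$ for each $i$ and then take $\log_p$.

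First I would record the cardinality of a truncated ideal. Each ring $\sK_i\cong S[X]/\langle m_i(X)\rangle$ ($0\le i\le u$) and $\sK_{i,j}=\sL_{i,j}\cong R[X]/\langle m_{i,j}(X)\rangle$ ($u+1\le i\le v$) is a finite commutative chain ring with maximal ideal $\langle x\rangle$, nilpotency index $m$, and residue field $\F_{p^{\kappa_{j_i}}}$ (because $m_i$ is basic irreducible of degree $\kappa_{j_i}$ over $S$ with $S/\langle x\rangle\cong\F_p$, and $m_{i,j}$ is basic irreducible of degree $\kappa_{j_i}/r$ over $R$ with $R/\langle x\rangle\cong\F_{p^r}$); consequently its ideal $\langle x^e\rangle$ has exactly $p^{\kappa_{j_i}(m-e)}$ elements. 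For $u+1\le i\le v$, Lemma \ref{epsilon_ij}(3) and Lemma \ref{lem:sK-sL} give $\sL_i=\bigoplus_{j=0}^{r-1}\sK_{i,j}$, so $|\sC_i|=\prod_j p^{\kappa_{j_i}(m-e_{i,j})}$ and $|\sC_i^{\perp}|=\prod_j p^{\kappa_{j_i}e_{i,j}}$. For $0\le i\le u$, the canonical decomposition gives $\sL_i=\bigoplus_{j=0}^{r-1}\omega^j\sK_i=\bigoplus_{j=0}^{r-1}\theta_j\sK_i$ with each summand $\sK_i$-free of rank one, so multiplication by $\omega^j$ (resp.\ $\theta_j$) carries $\langle x^{e_{i,j}}\rangle$ (resp.\ $\langle x^{m-e_{i,j}}\rangle$) bijectively onto the corresponding summand of $\sC_i$ (resp.\ $\sC_i^{\perp}$), and again $|\sC_i|=\prod_j p^{\kappa_{j_i}(m-e_{i,j})}$, $|\sC_i^{\perp}|=\prod_j p^{\kappa_{j_i}e_{i,j}}$. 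In every case $|\sC_i|\cdot|\sC_i^{\perp}|=p^{\kappa_{j_i}\sum_{j=0}^{r-1}m}=p^{rm\kappa_{j_i}}$, which equals $|\sL_i|$ (directly from $\sL_i=\bigoplus_j\omega^j\sK_i$, resp.\ $\sL_i=\bigoplus_j\sK_{i,j}$, since $|\sK_i|=|\sK_{i,j}|=p^{m\kappa_{j_i}}$). Multiplying over $i$ and using $\sR=\bigoplus_i\sL_i$ yields $|\sC|\cdot|\sC^{\perp}|=|\sR|$, and taking $\log_p$ gives the stated identity.

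I expect the only step needing genuine care is the case $0\le i\le u$, where the trace-dual elements $\theta_j$ occurring in (\ref{eq:sCperp}) need not be units of $R$, so one cannot argue that multiplication by $\theta_j$ is an automorphism of $\sL_i$. The remedy is to use only that $\{\theta_0,\dots,\theta_{r-1}\}$ is an $S$-basis of $R$: tensoring $R=\bigoplus_j S\theta_j$ with the $S$-free module $\sK_i$ (Lemma \ref{epsilon_i}(3)) shows $\sL_i=\bigoplus_j\theta_j\sK_i$ with each $\theta_j\sK_i$ free of rank one over $\sK_i$; restricting that rank-one $\sK_i$-isomorphism $\sK_i\to\theta_j\sK_i$ to the ideal $\langle x^{m-e_{i,j}}\rangle$ then gives $|x^{m-e_{i,j}}\theta_j\sK_i|=|x^{m-e_{i,j}}\sK_i|=p^{\kappa_{j_i}e_{i,j}}$. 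Everything else is routine bookkeeping with cardinalities of truncated ideals in finite chain rings.
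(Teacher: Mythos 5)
Your proof is correct and follows essentially the same route as the paper, whose proof is just the one-line remark that the claim follows from the explicit forms (\ref{eq:sC}), (\ref{eq:sCperp}) and the canonical decomposition of Lemma \ref{lem:cyclicness}; you have simply carried out the componentwise cardinality count that this remark leaves implicit, including the correct handling of the (possibly non-unit) dual-basis elements $\theta_j$ via the freeness of $R=\bigoplus_j S\theta_j$ over $S$.
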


\begin{proof}Follows from the equations (\ref{eq:sC}), (\ref{eq:sCperp}) and Lemma \ref{lem:cyclicness}. \end{proof}

\begin{corollary}\label{cor:selfadd}
Let $\sC$ be a code given as in (\ref{eq:sC}). Then, $\sC$ is self-dual if and only if $m$ is even and $e_{i,j}=\frac{m}{2}$ for all $0\leq i\leq v$ and $0\leq j\leq r-1$.
\end{corollary}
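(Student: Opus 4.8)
The plan is to compare the description of $\sC$ in \eqref{eq:sC} with the description of $\sC^{\perp}$ in \eqref{eq:sCperp} supplied by Theorem \ref{thm:main1}, and to use the uniqueness of the canonical decomposition (Lemma \ref{lem:cyclicness}) to turn the equality $\sC=\sC^{\perp}$ into a componentwise system of conditions on the exponents $e_{i,j}$. First I would recall that both $\sC$ and $\sC^{\perp}$ are written as direct sums indexed by the same blocks: the blocks $x^{e_{i,j}}\omega^j\sK_i$ for $0\le i\le u$, $0\le j\le r-1$, and the blocks $x^{e_{i,j}}\sK_{i,j}$ for $u+1\le i\le v$, $0\le j\le r-1$. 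Since $\sC=\bigoplus\sC_i$ with each $\sC_i$ a \emph{unique} $\sK_i$-submodule of $\sL_i$, the equality of two such codes forces equality block by block; moreover within a fixed $\sL_i$ the further splitting into the $\omega^j$- (resp. $h$-) pieces is also unique, so in fact $\sC=\sC^\perp$ is equivalent to equality of the corresponding individual summands.

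The next step is the key local computation: for a fixed block, I claim $x^{e}\omega^j\sK_i = x^{m-e'}\theta_j\sK_i$ (for the Galois-type blocks $i\le u$) forces $e=m-e'$, i.e. $e_{i,j}=m-e_{i,j}$, and similarly $x^{e}\sK_{i,j}=x^{m-e'}\sK_{i,j}$ forces $e=m-e'$ for the blocks with $i\ge u+1$. For the second family this is immediate because $\sK_{i,j}=\sL_{i,j}$ (Lemma \ref{lem:sK-sL}) is a principal ideal ring in which the chain of ideals is $\sL_{i,j}\supsetneq x\sL_{i,j}\supsetneq\cdots\supsetneq x^{m-1}\sL_{i,j}\supsetneq 0$, so $x^{e}\sL_{i,j}=x^{m-e'}\sL_{i,j}$ iff $e=m-e'$ (this uses that $x$ has nilpotency index $m$ on $\sL_{i,j}$, which follows because $\psi_{i,h}$ in Lemma \ref{epsilon_ij}(4) is a ring isomorphism onto $R[X]/\langle m_{i,h}(X)\rangle$ with $m_{i,h}$ basic irreducible, so $\sL_{i,j}$ is again a chain ring with the same nilpotency index $m$ as $R$). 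For the first family, $\theta_j$ is a unit in $R$ (the trace-dual basis elements are units, being non-zero modulo the maximal ideal — or one argues via $\omega^j\sK_i$ and $\theta_j\sK_i$ being the same $\sK_i$-module since $\{1,\dots,\omega^{r-1}\}$ and $\{\theta_0,\dots,\theta_{r-1}\}$ are both $S$-bases of $R$ lifted blockwise), so multiplication by $\theta_j$ is a module isomorphism of $\omega^j\sK_i$ onto itself, and the same chain-of-ideals argument applies to $\omega^j\sK_i\cong\sK_i$. Hence each block contributes exactly the condition $e_{i,j}=m-e_{i,j}$, i.e. $2e_{i,j}=m$.

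Finally I would assemble the conclusion: the system $\{2e_{i,j}=m : 0\le i\le v,\ 0\le j\le r-1\}$ is solvable in integers $0\le e_{i,j}\le m-1$ if and only if $m$ is even, in which case the unique solution is $e_{i,j}=m/2$ for all $i,j$; conversely if $m$ is even and all $e_{i,j}=m/2$ then \eqref{eq:sC} and \eqref{eq:sCperp} visibly agree, so $\sC$ is self-dual. The main obstacle I anticipate is the bookkeeping around the first family of blocks: one must be careful that replacing the basis $\{1,\omega,\dots,\omega^{r-1}\}$ by its trace dual $\{\theta_0,\dots,\theta_{r-1}\}$ genuinely permutes-and-preserves the summands $\omega^j\sK_i\mapsto\theta_j\sK_i$ as $\sK_i$-modules inside $\sL_i$ (so that matching $\sC$ with $\sC^\perp$ really is a matching of $x^{e_{i,j}}$ against $x^{m-e_{i,j}}$ within the \emph{same} submodule and not across different $j$), and that $x$ acts with nilpotency index exactly $m$ on every block so that distinct powers $x^e$, $0\le e\le m-1$, give distinct submodules. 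Both points are handled by the ring-isomorphism statements in Lemmas \ref{epsilon_i}--\ref{epsilon_ij} together with the fact that $R$ (hence each $R_i$, $R[X]/\langle m_{i,h}\rangle$) is a chain ring of nilpotency index $m$.
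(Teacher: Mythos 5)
Your handling of the blocks $u+1\le i\le v$ and of the ``if'' direction is sound, but the key local step for the blocks $0\le i\le u$ does not hold. You reduce $\sC=\sC^{\perp}$ on such a block to the termwise identities $x^{e_{i,j}}\omega^{j}\sK_{i}=x^{m-e_{i,j}}\theta_{j}\sK_{i}$, on the grounds that $\omega^{j}\sK_{i}$ and $\theta_{j}\sK_{i}$ are the same $\sK_{i}$-submodule of $\sL_{i}$. That would require $\theta_{j}\omega^{-j}$ to be a unit of $S$, which is false in general: in Example \ref{ex:Gal} one has $\theta_{0}=\omega+3\notin S$, so $\theta_{0}\sK_{0}\neq\sK_{0}=\omega^{0}\sK_{0}$. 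More fundamentally, for $i\le u$ the splittings $\sL_{i}=\bigoplus_{j}\omega^{j}\sK_{i}$ and $\sL_{i}=\bigoplus_{j}\theta_{j}\sK_{i}$ come from two different $S$-bases of $R$ and are not induced by idempotents, so the uniqueness statements in Lemmas \ref{lem:cyclicness}, \ref{epsilon_i} and \ref{epsilon_ij} --- which you invoke to justify ``equality of the corresponding individual summands'' --- only reach down to the components $\sC\cap\sL_{i}$, not to the rank-one pieces. (For $i\ge u+1$ the splitting into the $\sL_{i,h}$ \emph{is} idempotent-induced, which is why that half of your argument works.)

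This is not a presentational gap that the cited lemmas can fill. Once the termwise matching is abandoned, what survives of comparing (\ref{eq:sC}) with (\ref{eq:sCperp}) on a block $i\le u$ is only that $\sum_{j}e_{i,j}=rm/2$ (from cardinalities) and that the multisets $\{e_{i,j}\}_{j}$ and $\{m-e_{i,j}\}_{j}$ coincide (from the module type); neither forces $e_{i,j}=m/2$. Indeed the ``only if'' direction can genuinely fail on these blocks: take $S=\F_{2}[x]/<x^{4}>$ and $R=\F_{4}[x]/<x^{4}>=S[\omega]$ with $\omega^{2}+\omega+1=0$ (so $r=2$, $m=4$) and $N=1$. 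Here $Tr(c_{0}+c_{1}\omega)=c_{1}$ for $c_{0},c_{1}\in S$, so the code $\sC=x\sK_{0}+x^{3}\omega\sK_{0}$ is self-orthogonal (any product of two codewords lies in $x^{2}S$, whose trace is $0$ because $Tr(1)=2=0$) and has $|\sC|=2^{4}=\sqrt{|\sR|}$, hence $\sC=\sC^{\perp}$ even though $(e_{0,0},e_{0,1})=(1,3)\neq(2,2)$. A correct proof of the forward implication therefore has to use the trace form on $\sL_{i}$ itself --- for instance the $x$-valuations of the Gram entries $Tr(\omega^{j+l})$ --- and needs a hypothesis excluding degenerate diagonal entries such as $Tr(1)=0$; the chain-of-ideals bookkeeping you propose cannot produce the conclusion on its own.
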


\begin{proof} The evenness of $m$ and the property $e_{i,j}=\frac{m}{2}$ for all $0\leq i\leq v$ and $0\leq j\leq r-1$ can be derived from (\ref{eq:sC}) and (\ref{eq:sCperp}). Remark that the basis $\{1,\omega,...,\omega^{r-1}\}$ is also the basis of $\Z_{p^n}[\omega]$ over $\Z_{p^n}$, hence the dual basis $\{\theta_0,...,\theta_{r-1}\}$ of it exists, and this existence is enough to complete the proof (since all $e_{i,j}$'s are the same).     
\end{proof}

Now, we illustrate Theorem \ref{thm:main1} and its corollaries above in the following example.
\begin{example}\label{ex:Gal}
Let $S=\Z_4[x]/<x^2+2,2x>$ and $R=S[\omega]$, where $\omega$ is a root of the polynomial $X^2+X+1\in S[X]$. That is, $p=2,n=2,r=2,g(x)=x^2+2,k=2,t=1$ and so $m=3$. Let also $N=3$ (which is relatively prime to $p$). Hence $\widehat{R}=R$ and $\zeta =\omega =\eta$. Then we have
$$
C_0^{(p)}=C_0^{(p^r)}=\{0\}, \quad C_1^{(p)}=\{1,2\},\quad C_{1,0}^{(p^r)}=\{1\}\text{ and }C_{1,1}^{(p^r)}=\{2\}. 
$$ 
That is, $u=0$ and $v=1$. Accordingly,
$$
\begin{array}{ll}
m_0(X)=X+3, & m_{1,0}(X)=X+3\omega, \\
m_1(X)=X^2+X+1, & m_{1,1}(X)=X+(\omega +1),	
\end{array}
$$
and 
$$
\begin{array}{ll}
\epsilon_0(X)=3X^2+3X+3, & \epsilon_{1,0}(X)=3\omega X^2+(\omega +1)X+3, \\
\epsilon_1(X)=X^2+X+2, & \epsilon_{1,1}(X)=(\omega +1)X^2+3\omega X+3.	
\end{array}
$$
Direct computations give that
$$
\begin{array}{l}
\sK_0 =\{aX^2+aX+a:a\in S\},  \\  
\sK_1 =\{aX^2+bX+(-a-b):a,b\in S\},  \\
\sK_{1,0} =\{(a+\omega b)X^2+(-b+\omega (a-b))X+(b-a+\omega(-a)):a,b\in S\}, \\	
\sK_{1,1} =\{(a+\omega b)X^2+(b-a+\omega (-a))X+(-b+\omega (a-b)):a,b\in S\}.	
\end{array}
$$
Notice also that $\theta_0=\omega +3$ and $\theta_1=2\omega +1$. Let 
$$
\sC =\sK_0 +2\omega \sK_0 + x\sK_{1,0},
$$
then
$$
\sC^{\perp} =x\theta_1\sK_0 + 2\sK_{1,0} + \sK_{1,1}.
$$
Observe that $|\sC|=2^8$, $|\sC^{\perp}|=2^{10}$ and $|\sR|=(2^6)^3$. Also remark that any self-dual codes do not exist in this ambient space $\sR$, since $m$ is not even.
\end{example}

\section{Eisenstein-additive cyclic codes}\label{sec:addEis}

Consider the same definitions and notations in Section \ref{sec:addGal} but inserting $R=\frac{\Z_{p^n}[x]}{<g(x),p^{n-1}x^t>}$ and $S=\Z_{p^n}$, where $p$ is a prime, $n$ is a positive integer and $g(x)\in\Z_{p^n}[x]$ is an Eisenstein polynomial of degree $k$. Notice that $S$ is a Galois ring which is the coefficient ring of $R$. In this case, we obtain $u=v$. 

Now, before to give our main theorem, we give a lemma to clarify some further points.

\begin{lemma}\label{lem:free}
$R$ is not a free module over $S$.
\end{lemma}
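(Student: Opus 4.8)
The plan is to exhibit an explicit element of $R$ whose annihilator over $S$ is nonzero, which forces $R$ not to be free. Recall from Theorem \ref{thm:fccr} and the displayed form of $R$ in this section that $R=\Z_{p^n}[x]/\langle g(x),p^{n-1}x^t\rangle$ with $g(x)$ Eisenstein of degree $k$, and that the case at hand is the proper (non-Galois) case $k\geq 2$ with $1\leq t\leq k$ and $n\geq 2$ (otherwise $R$ is itself $\Z_{p^n}$ or a quasi-Galois ring, and the statement is trivially about a chain ring that need not be free over a subring). The key structural fact I would invoke is the cardinality formula $|R|=p^{r(k(n-1)+t)}$ with $r=1$ here, so $|R|=p^{k(n-1)+t}$, while $|S|=|\Z_{p^n}|=p^n$.

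First I would observe that if $R$ were free over $S$ of some rank $\ell$, then $|R|=|S|^{\ell}=p^{n\ell}$, so we would need $n\mid k(n-1)+t$, equivalently $n\mid t-k$ (since $k(n-1)+t\equiv t-k\pmod n$). This already rules out freeness whenever $n\nmid(k-t)$, and in particular recovers that the ``degree of extension'' $\log_{|S|}|R|=(k(n-1)+t)/n$ need not be an integer, as remarked after Figure \ref{tree:fccr}. However, to prove the lemma in full generality — including cases where $n\mid(k-t)$ so the cardinalities are compatible — I would give the direct torsion argument: the element $p^{n-1}x^t$ is zero in $R$ by definition of the defining ideal, yet $x^t\neq 0$ in $R$ because $t\leq k(n-1)+t-1$ (as $k(n-1)\geq 1$) and the nilpotency index of $x$ is $k(n-1)+t$. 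Hence $x^t$ is a nonzero element of $R$ killed by the nonzero scalar $p^{n-1}\in S$ (nonzero since $S=\Z_{p^n}$ and $n\geq 1$, indeed $n\geq 2$ in the proper case). A free module over $S$ is torsion-free in the sense that multiplication by a nonzero element of $S$ that is a nonzerodivisor... — but $p^{n-1}$ \emph{is} a zerodivisor in $S$, so I must phrase this correctly: in a free $S$-module $S^{\ell}$, an element is annihilated by $p^{n-1}$ iff each coordinate is, iff each coordinate lies in $p S\cong$ ... precisely, $\mathrm{Ann}_S(v)$ for $v\in S^\ell$ is the intersection of the annihilators of its coordinates, so the $p^{n-1}$-torsion submodule of $S^\ell$ is $(pS)^\ell=p\Z_{p^n}$ in each slot, which has index $p^\ell$; I would instead argue via the submodule $\langle x^t\rangle\subseteq R$ and compare annihilators, or more cleanly: count.

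The cleanest route, and the one I would actually write, is the counting argument combined with the torsion observation. Suppose $R\cong S^\ell$ as $S$-modules. Then the annihilator ideal $\mathrm{Ann}_S(R)=\{s\in S: sR=0\}$ equals $\mathrm{Ann}_S(S^\ell)=0$ since $S^\ell$ contains a copy of $S$. But I claim $\mathrm{Ann}_S(R)=0$ holds for our $R$ too (as $1\in R$), so that is not yet a contradiction — the contradiction must come from the \emph{fine} module structure. So instead I compare the $p$-torsion: in $S^\ell$ the submodule $\{v: pv=0\}$ has cardinality $p^\ell$ (it is $(p^{n-1}S)^\ell$). In $R$, $\{a\in R: pa=0\}=\{a: p a=0\}$; since $p = u x^k$ for a unit $u$ (as $\langle p\rangle=\langle x^k\rangle$ from the Proposition), this is $\{a: x^k a=0\}=\langle x^{k(n-1)+t-k}\rangle=\langle x^{k(n-2)+t}\rangle$, which has cardinality $p^{k-\text{(something)}}$... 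I would compute its size as $p^{\,k}$ using $|\langle x^j\rangle|=p^{k(n-1)+t-j}$ with $j=k(n-2)+t$, giving $p^{k}$. Freeness then forces $p^\ell=p^k$, i.e. $\ell=k$, combined with $|R|=|S|^\ell$ giving $k(n-1)+t=n\ell=nk$, i.e. $t=k$ and then further $\ell=k$ with $n\mid 0$ — fine — so in the borderline case $t=k$ this does not yet contradict. \textbf{The main obstacle} is precisely this borderline case $t=k$ (where $R=\Z_{p^n}[x]/\langle g(x)\rangle$ is the genuinely free-looking quotient): there one must show $R$ is still not free over $\Z_{p^n}$, and the resolution is that an Eisenstein polynomial $g(x)=x^k+p(\cdots)$ makes $R$ a \emph{totally ramified} extension whose underlying $\Z_{p^n}$-module is free of rank $k$ — wait, that would make it free. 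I need to reexamine: when $t=k$ we are in the Galois/ramified chain ring case $\Z_{p^n}[x]/\langle g(x)\rangle$ which \emph{is} free of rank $k$ over $\Z_{p^n}$; so the lemma's hypothesis must implicitly exclude $t=k$, or more likely the lemma is asserting non-freeness exactly in the proper non-free regime $t<k$. Therefore I would state at the outset that we are in the case $t<k$ (the genuinely new case; when $t=k$, $R$ is free and the trace machinery of Section \ref{sec:addGal} applies verbatim), and then the torsion-counting contradiction $p^\ell = |\{a\in R: pa=0\}| = p^{k-t}\cdot$(correcting the exponent) together with $|S|^\ell=|R|$ closes the argument. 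The genuine hard point to get right is the exact exponent bookkeeping in $|\{a:pa=0\}|$ versus $|\{v\in S^\ell: pv=0\}|=p^\ell$, and reconciling it with $\ell=(k(n-1)+t)/n$; once $t<k$ these two determinations of $\ell$ are incompatible, yielding the contradiction.
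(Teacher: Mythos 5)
Your argument is correct but follows a genuinely different route from the paper's. The paper's proof is essentially one line: $B=\{1,x,\dots,x^{k-1}\}$ is a minimal spanning set of $R$ over the local ring $S=\Z_{p^n}$, yet it satisfies the nontrivial relation $p^{n-1}\cdot x^t=0$ with $p^{n-1}\neq0$ in $S$ (this needs $t\le k-1$, so that $x^t$ is actually a member of $B$), so no minimal generating set is a basis and hence no basis exists. You instead determine the rank $\ell$ of a hypothetical free module in two independent ways --- $\ell=(k(n-1)+t)/n$ from $|R|=|S|^{\ell}$, and $\ell=k$ from the cardinality $p^{k}$ of the $p$-torsion subgroup $\{a\in R: pa=0\}=\langle x^{k(n-2)+t}\rangle$ (using $\langle p\rangle=\langle x^{k}\rangle$) --- and these agree only when $t=k$. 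Both proofs are sound; yours is longer but more self-contained (it does not invoke the local-ring fact that any basis of a finite free module is a minimal generating set of the common minimal size), and its first half already disposes of every case with $n\nmid(k-t)$ by pure counting. Your most valuable observation, which the paper suppresses, is that the statement is literally false when $t=k$: there $p^{n-1}x^{k}$ already lies in $\langle g(x)\rangle$, $R=\Z_{p^n}[x]/\langle g(x)\rangle$ is free of rank $k$ over $\Z_{p^n}$, and both proofs break down; the hypothesis $t<k$ (equivalently $n\ge 2$ together with a proper Eisenstein quotient) must be made explicit. Two points to fix in a final writeup: the closing sentence's ``$p^{k-t}$'' contradicts your own correct computation $p^{k}$ earlier --- only the value $p^{k}$ yields $\ell=k$ and hence $nk=k(n-1)+t$, i.e.\ $t=k$, which is the contradiction you need for $t<k$ --- and the exploratory detours (the annihilator of all of $R$, the abandoned torsion-freeness remark) should be deleted.
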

\begin{proof}
Clearly, the set $B=\{1,x,...,x^{k-1}\}$ is a minimal set spanning $R$ over $S$. However, $B$ is not linearly independent, since $p^{n-1}$ is non-zero but $p^{n-1}x^t=0$. Therefore, no bases of $R$ exist over $S$, i.e., $R$ is not a free module over $S$.  
\end{proof}

The second main result in this paper is the following theorem. 
\begin{theorem}\label{thm:main2}
Any $S$-linear cyclic code $\sC\subseteq\sR$ over $R$ of length $N$ is of the form
\begin{equation}
\sC =\sum_{i=0}^{v}\sum_{j=0}^{m-1}a_{i,j}x^j\sK_{i}, 
\label{eq:sC-Eis}
\end{equation}
for some $a_{i,j}\in\{0,1\}\subseteq R$. 
\end{theorem}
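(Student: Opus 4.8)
The plan is to mirror the structure of the proof of Theorem \ref{thm:main1}, but to take account of the fact that here $R$ is obtained from its coefficient ring $S=\Z_{p^n}$ by an Eisenstein polynomial alone (with no Galois part), so that $r=1$ and consequently $u=v$. First I would recall that, since $r=1$, the Frobenius-type map $\phi=\widehat\phi|_R$ is the identity, every $p$-cyclotomic coset $C_{j_i}^{(p)}$ already equals the corresponding $C_{j_i}^{(p^r)}$, and there are no indices in the range $u+1\le i\le v$; thus Lemma \ref{epsilon_i} alone gives the idempotent decomposition $\sS=\bigoplus_{i=0}^v\sK_i$ over $S$, and, tensoring up, the decomposition $\sR=\bigoplus_{i=0}^v\sL_i$ with $\sL_i=\epsilon_i(X)\sR$. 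So the canonical decomposition of $\sR$ from Lemma \ref{lem:cyclicness} reads simply $\sR=\bigoplus_{i=0}^v\sL_i$, and each $\sL_i$ is a module over the local ring $\sK_i\cong R_i=S[X]/\langle m_i(X)\rangle$, which is itself a finite commutative chain ring (the Hensel lift $m_i$ of an irreducible factor of $X^N-1$ is a basic irreducible over $S$, so $R_i$ is a Galois ring, and $\sL_i\cong R_i[x]/\langle g(x),p^{n-1}x^t\rangle$ is a chain ring of the same shape).

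Next, by Lemma \ref{lem:cyclicness}, an arbitrary $S$-linear cyclic code $\sC\subseteq\sR$ is exactly a direct sum $\bigoplus_{i=0}^v\sC_i$ where each $\sC_i$ is a $\sK_i$-submodule of $\sL_i$. The core of the argument is therefore purely local: I must show that every $S$-submodule of $\sL_i$ that is closed under multiplication by $\sK_i$ can be written in the stated form $\sum_{j=0}^{m-1}a_{i,j}x^j\sK_i$ with $a_{i,j}\in\{0,1\}$. Working inside the chain ring $\sL_i$ with maximal ideal generated by $x$ and nilpotency index $m=k(n-1)+t$, a $\sK_i$-submodule of $\sL_i$ is the same as an $R_i$-submodule of $R_i[x]/\langle g(x),p^{n-1}x^t\rangle$; using the $S$-module (equivalently $\Z_{p^n}$-module) structure described in the Proposition of Section \ref{sec:fccr}, each element of $\sL_i$ has a unique expansion $\sum_j b_j x^j$ with the $b_j$ ranging over $R_i$ for $j<t$ and over $pR_i\cong\Z_{p^{?}}$-type truncations for $t\le j\le k-1$. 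A $\sK_i$-submodule is determined by, for each degree $j$, whether or not the "layer" $x^j\sK_i$ (more precisely the image of $x^j$ times the unit group) is hit; since the ideals of $\sL_i$ form a chain, the submodule is a sum of the pieces $x^{j}\sK_i$ over those $j$ for which it is nonzero, and a single representative $a_{i,j}\in\{0,1\}$ (namely $a_{i,j}=1$ if that layer is included, $0$ otherwise) suffices to record this. Summing over $i$ and renaming indices $j$ on the range $0\le j\le m-1$ yields the displayed form (\ref{eq:sC-Eis}).

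The main obstacle I anticipate is the bookkeeping in the local step, not any deep idea: one has to be careful that because $R$ is \emph{not} free over $S$ (Lemma \ref{lem:free}), the "$x^j$-layers" for $t\le j\le k-1$ carry fewer elements than those for $j<t$ (the coefficient $p^{n-1}$ acts as zero there), so the representatives $a_{i,j}$ are genuinely in $\{0,1\}$ rather than indexing a chain of nested sub-ideals of $\sK_i$ as in the Galois case of Theorem \ref{thm:main1}; in other words, the exponents $e_{i,j}$ of that theorem collapse to a single on/off choice. I would make this precise by noting that $\sL_i=\bigoplus_{j=0}^{m-1} x^j\,\overline{\sK_i}^{(j)}$ as $\sK_i$-modules, where each $\overline{\sK_i}^{(j)}$ is a cyclic $\sK_i$-module (either $\sK_i$ itself or its quotient by $p^{n-1}$), so that any $\sK_i$-submodule is a direct sum of submodules of the individual cyclic pieces, and then invoking the chain-ring structure of $\sK_i$ to see that each such choice is captured by the coefficient $a_{i,j}\in\{0,1\}$ scaling $x^j$. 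The remaining verification — that distinct choices of the $a_{i,j}$ give distinct codes, and hence the representation is the one asserted — follows from the directness of the decomposition $\sR=\bigoplus_i\sL_i=\bigoplus_{i,j} x^j\overline{\sK_i}^{(j)}$, exactly as in the proof of Theorem \ref{thm:main1}.
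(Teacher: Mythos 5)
Your overall plan --- reduce via the canonical decomposition of Lemma \ref{lem:cyclicness} to classifying the $\sK_i$-submodules of each $\sL_i$ --- is exactly the route the paper takes (its proof of Theorem \ref{thm:main2} is a one-line deferral to the proof of Theorem \ref{thm:main1}, which does precisely this). The problem is the local step, which you correctly identify as the crux but then dispose of with an argument that does not hold. You assert that because $\sL_i=\bigoplus_{j}x^j\,\overline{\sK_i}^{(j)}$ with each layer a cyclic $\sK_i$-module, any $\sK_i$-submodule is a direct sum of submodules of the individual layers. This is false in general: a submodule of a direct sum need not be a direct sum of submodules of the summands, and the appeal to the chain of \emph{ideals} of $\sL_i$ is a non sequitur, since the components $\sC_i$ are only required to be $\sK_i$-submodules, not ideals of $\sL_i$. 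The ``diagonal'' submodules are exactly what this argument misses.

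Concretely, take $p=2$, $n=2$, $R=\Z_4[x]/\langle x^2+2,2x\rangle$, $S=\Z_4$, $N=3$ (the setting of Example \ref{ex:Eis}) and $i=0$, so that $\sK_0\cong\Z_4$ and $\sL_0\cong R$. The $\sK_0$-submodule $\sC_0=\sK_0\cdot(1+x)\epsilon_0(X)=\{0,(1+x)\epsilon_0(X),2\epsilon_0(X),(3+x)\epsilon_0(X)\}$ is a legitimate component of an $S$-linear cyclic code, but the only sets of the form $\sum_j a_jx^j\sK_0$ with $a_j\in\{0,1\}$ having four elements are $\sK_0$ and $x\sK_0+x^2\sK_0=\{0,x,2,2+x\}\epsilon_0(X)$, and neither equals $\sC_0$. (Even simpler: $\sK_0\cdot(2+x)\epsilon_0(X)$ has order two and is distinct from both $x\sK_0$ and $x^2\sK_0$.) So the step fails, and no amount of bookkeeping will rescue it: the asserted normal form omits these codes, and a correct classification of the $\sK_i$-submodules of $\sL_i$ requires generators in a canonical generator-matrix (Smith-type) form as in \cite{CGFC2015}, not an on/off choice per layer. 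The same objection applies to the paper's own (deferred) argument; your write-up has the virtue of making the unjustified step explicit, but it does not close it.
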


\begin{proof}
The proof can be done similar to the proof of Theorem \ref{thm:main1}. Remark that Lemma \ref{lem:mu} works efficiently also here, since $u=v$. 
\end{proof}

\begin{remark}\label{rem:aij}
Notice that, if $a_{i,j_0}=1$ for some $0\leq j_0\leq m-1$, then $a_{i,j_0+kl}$ can be taken both zero and one for $l\geq 1$ (since $<x^k>=<p>\subseteq S$). However, any such situation does not disturb the set up of Theorem \ref{thm:main2}. 
\end{remark}

\begin{remark}\label{rem:dualdual}
We have not mention duality in Theorem \ref{thm:main2}, whereas we have done in Theorem \ref{thm:main1}. The reason is related to the profile of the extension of $R$ over $S$. Since $R$ is not free over $S$, we can not determine any trace function for $x$ in $R$ over $S$, and hence we can not define any trace inner products. In the following subsection, we examine the duality notion for Eisenstein additivity separately.  
\end{remark}

\subsection{Character Theoretic Duality for Eisenstein-Additive Codes}\label{sec:EisDual}

Eisenstein extension is not a free extension when $t\neq k$, hence the problem of a suitable inner product for Eisenstein-additive codes occurs. The character theoretic approach in \cite{Woo2008} provides a convenient inner product and a duality notion when we consider the one-to-oneness between a code and its dual (a MacWilliams identity). Remark that the character theoretic duality notion in \cite{Woo2008} was given for Frobenius rings, so we may apply this notion to finite commutative chain rings. In this subsection, we adjust the notion in \cite{Woo2008} to our context assuming that the reader has some basic knowledge about characters (otherwise, we suggest \cite[Section 3]{Woo2008} for the sufficient information about characters we use in this paper).  
 
Consider commutative chain ring $R=\Z_{p^n}[x]/<g(x),p^{n-1}x^t>$, where $g(x)$ is an Eisenstein polynomial of degree $k$. Clearly the additive structure of $R$ is isomorphic to the finite abelian group $G$ of the form  
$$
G=\bigoplus_{i=1}^{t}\Z_{p^n}\oplus\bigoplus_{i=1}^{k-t}\Z_{p^{n-1}}.
$$
Consider the unique representation $a=a_0+a_1x+...+a_{k-1}x^{k-1}$ of elements $a\in R$, where $a_i\in\Z_{p^n}$ for $0\leq i\leq t-1$ and $a_i\in\{0,1,...,p^{n-1}-1\}\subseteq\Z_{p^{n}}$ for $t\leq i\leq k-1$. Corresponding to each element $a\in R$, we define a map $\chi_a$ from $R$ to $\C$ (the set of complex numbers) given by
$$
\chi_a:z=z_0+z_1x+...+z_{k-1}x^{k-1}\mapsto \eta_{p^n}^{a_0z_0+...+a_{t-1}z_{t-1}}\eta_{p^{n-1}}^{a_tz_t+...+a_{k-1}z_{k-1}},
$$
where $\eta_{p^n}$ and $\eta_{p^{n-1}}$ are the $p^n$.th and the $p^{n-1}$.th root of unities respectively. This map is clearly an additive character of $R$ (that is, $\chi_a$ is a group homomorphism from the additive structure of $R$ to the group of non-zero complex numbers with multiplication). Let $\chi=\{\chi_a: a\in R\}$, then $\chi$ with the point-wise multiplication is a group and isomorphic to the additive structure of $R$ (i.e. $\chi$ is a (additive) character group of $R$).         

In addition, $\chi_a$ corresponds to the vector $(\chi_a(z))_{z\in R}\in\C^{|R|}$ for each $a\in R$. Remark that the addition $a+b$ in $R$ corresponds to the component-wise multiplication of $(\chi_a(z))_{z\in R}$ and $(\chi_b(z))_{z\in R}$, when we fix the order of elements $z$ in $R$. We use the notation $\chi_a$ for both the homomorphism $\chi_a$ and the corresponding vector $(\chi_a(z))_{z\in R}$ when the meaning is clear. Now we define an inner product between vectors $\chi_a$ and $\chi_b$ by
\begin{equation}\label{eq:Herm}
(\chi_a,\chi_b)=\frac{1}{|R|}\sum_{z\in R}\chi_a(z)\overline{\chi_b(z)},
\end{equation}
where $\overline{\chi_b(z)}$ denotes the complex conjugate of $\chi_b(z)$. This inner product is indeed a positive definite Hermitian product.

We consider ring $S=\Z_{p^n}$ as an additive subgroup of $R$, and hence define the \textit{annihilator} $(\chi :S)=\{\chi_a\in\chi: \chi_a(z)=0\text{ for all }z\in S\}$. Then $(\chi :S)$ is isomorphic to the character group of the quotient group $R/S$, and hence $|(\chi :S)|=|R|/|S|$. 

Let $G_1$ and $G_2$ be two finite abelian groups and $\chi_1$ and $\chi_2$ be their character groups respectively. Then the character group of the group $G_1\times G_2$ with the component-wise operations is indeed $\chi_1\times \chi_2$. 

All the set up about characters mentioned above allow us to define dual codes for Eisenstein-additive codes. An Eisenstein-additive code $\sC$ was an additive subgroup of $R^N$ (or equivalently $\sR$), thus the \textit{dual} of $\sC$ is defined as the additive subgroup $\sD$ of $R^N$ given by $\sD=\{(a_1,...,a_N)\in R^N: (\chi_{a_1},...,\chi_{a_N})\in (\chi^N:\sC)\}$. In that way, the one-to-oneness between an Eisenstein-additive code and its dual is satisfied according to \cite[Theorem 4.2.1]{Woo2008}.

\begin{example}\label{ex:Eis}
	Let $S=\Z_4$ and $R=S[x]/<x^2+2,2x>$. That is, $p=2,n=2,r=1,g(x)=x^2+2,k=2,t=1$ and so $m=3$. Let also $N=3$ (which is relatively prime to $p$). Hence $\widehat{R}=R$ and $\zeta =\eta$. Then we have
	$$
	\begin{array}{ll}
	C_0^{(p)}=\{0\}, 	& 	C_1^{(p)}=\{1,2\}; 				\\
	m_0(X)=X+3, 			& 	m_1(X)=X^2+X+1;						\\
	\epsilon_0(X)=3X^2+3X+3, & \epsilon_1(X)=X^2+X+2.
	\end{array}
	$$
	and hence
	$$
	\begin{array}{ll}
	\sK_0 =\{aX^2+aX+a:a\in S\}, & \sK_1 =\{aX^2+bX+(-a-b):a,b\in S\}.
	\end{array}
	$$
If we write the elements $a\in R$ as $a=a_0+a_1x$ where $a_0\in\Z_4$ and $a_1\in\{0,1\}$, then we may define the corresponding characters $\chi_a$ as $\chi_a=i^{a_0}(-1)^{a_1}$ where $i$ is the primitive $4$.th root of unity (in the set of complex numbers). Considering the ordering 
$$
(0,1,2,3,x,x+1,x+2,x+3)
$$
of elements in $R$, we may write the elements of the character group $\chi$ of $R$ in vector form as follows:
$$
\begin{array}{lll}
\chi_0 & = & (1,1,1,1,1,1,1,1), \\
\chi_1 & = & (1,i,-1,-i,1,i,-1,-i), \\
\chi_2 & = & (1,-1,1,-1,1,-1,1,-1), \\
\chi_3 & = & (1,-i,-1,i,1,-i,-1,i), \\
\chi_x & = & (1,1,1,1,-1,-1,-1,-1), \\
\chi_{x+1} & = & (1,i,-1,-i,-1,-i,1,i), \\
\chi_{x+2} & = & (1,-1,1,-1,-1,1,-1,1), \\
\chi_{x+3} & = & (1,-i,-1,i,-1,i,1,-i).
\end{array}
$$
Then the annihilator of $S$ is $(\chi :S)=\{\chi_0,\chi_x\}$ and hence the dual of $S$ is $xS=\{0,x\}$. Duality of other subgroups of $R$ is as follows.
 $$
 \begin{array}{rll}
	R & \leftrightarrow & \{0\}, \\
	S & \leftrightarrow & xS, \\
	xR & \leftrightarrow & 2R. 
 \end{array}
 $$
Now, let us define a code. Let 
$$
\sC =\sK_0+x\sK_0 + x\sK_1.
$$
Here, we can also write $\sC =\sK_0+x\sK_0+x^2\sK_0+ x\sK_1$ (recall Remark \ref{rem:aij}). In addition, we can write $\sC=\sL_0+x\sK_1=R\sK_0+x\sK_1$. Then we have
	$$
	\sC^{\perp} =\sK_1.
	$$
Observe that $|\sC|=2^5$ and $|\sC^{\perp}|=2^{4}$, i.e., $|\sC|\cdot|\sC^{\perp}|=|\sR|=2^9$. Also remark that any self-dual codes do not exist in this ambient space, since no subgroups of $R$ are self-dual.
\end{example}


\section*{Acknowledgement}
The skeleton of this study has been constructed during the second author's visit to University of Valladolid between March 1-31 in 2016 by the support of COST Action IC 1104 Random Network Coding and Designs over GF(q). 


\begin{thebibliography}{99}

\bibitem{Bie2007} 
		J. Bierbrauer: 
		\textit{Cyclic Additive and Quantum Stabilizer Codes}. 
		WAIFI 2007, pp. 276--283.
		
\bibitem{BF2002} 
		G. Bini and F. Flamini: 
		\textit{Finite commutative rings and their applications}. 
		Kluwer Int. Ser. Eng. Comp. Sci. 680, Kluwer Academic Publ., Boston, MA 2002. 	

\bibitem{CGFC2015}
     Y. Cao, J. Gao, F.-W. Fu and Y. Cao:
     \textit{Enumeration and construction of additive cyclic codes over Galois rings}.
     Discr. Math., vol. 338 (2015), pp. 922--937.

\bibitem{CRSS1998}
     A. R. Calderbank, E. Rains, P. Shor and N. Sloane:
     \textit{Quantum error correction via codes over GF(4)}. 
     IEEE Trans. Inf. Theory, vol. 44 (1998), pp. 1369--1387.

\bibitem{CS1995}
     A. R. Calderbank and N. Sloane:
     \textit{Modular and p-adic codes}. 
     Des. Codes Cryptogr., vol. 6 (1995), pp. 21--35.
		
\bibitem{CHKSS1993}
     A. R. Calderbank, A. R. Hammons, P. V. Kumar, N. J. A Sloane and P. Sole:
     \textit{A linear construction for certain Kerdock codes and Preparata codes}. 
     Bull. Amer. Math. Soc. (N.S.), vol. 29 (1993), pp. 218--222.

\bibitem{DL2004}
     H. Q. Dinh and S. R. Lopez-Permouth:
     \textit{Cyclic and negacyclic codes over finite chain rings}.
     IEEE Trans. Inf. Theory, vol. 50 (2004), pp. 1728--1744.

\bibitem{Hen1918}
		K. Hensel:
		\textit{Eine neue Theorie der algebraischen Zahlen}.  
		Mathematische Zeitschift, vol. 2 (1918), pp. 433--452.

\bibitem{HL2010}
     T. Honold and I. Landjev:
     \textit{Linear codes over finite chain rings}. 
     Electron. J. Comb., vol. 7 (2010), R11.

\bibitem{Hou2001} 
		X.-D. Hou:  
		\textit{Finite commutative chain rings}.  
		Finite Fields and Their Appl., vol. 7 (2001), pp. 382--396.

\bibitem{HKCSS1994}
     A. R. Hammons, P. V. Kumar, A. R. Calderbank, N. J. A Sloane and P. Sole:
     \textit{The $Z_4$-linearity of Kerdock, Preparata, Goethels and related codes}. 
     IEEE Trans. Inf. Theory, vol. 40 (1994), pp. 301--319.
		
\bibitem{Huf2013}
     W. C. Huffman:
     \textit{On the theory of $\F_q$-linear $\F_{q^t}$ codes}. 
     Adv. Math. Commun., vol. 7 (2013), pp. 349--378.

\bibitem{LOOS2013}
     S. R. Lopez-Permouth, H. \"{O}zadam, F. \"{O}zbudak and S. Szabo:
     \textit{Polycyclic codes over Galois rings with applications to repeated-root constacyclic codes}. 
     Finite Fields and Their Appl., vol. 19 (2013), pp. 16--38.

\bibitem{MPR2013}
     E. Martinez-Moro, A. Pinera-Nicolas and I. F. Rua:
     \textit{Additive semisimple multivariable codes over GF(4)}. 
     Des. Codes Cryptogr., vol. 69 (2013), pp. 161--180.

\bibitem{MR2006}
     E. Martinez-Moro and I. F. Rua:
     \textit{Multivariable codes over finite chain rings: serial codes}. 
     SIAM J. Discr. Math, vol. 20 (2006), pp. 947--959.

\bibitem{McD1974} 
		B. R. McDonald: 
		\textit{Finite rings with identity}.  
		(Pure and Appl. Math.) New York: Marcel Dekker, 1974.

\bibitem{Nec1982} 
		A. A. Nechaev:  
		\textit{Trace functions in Galois rings and noise stable codes}.  
		V All-Union Symp. Theory of Rings, Algebras and Modules, Novosibirsk, U.S.S.R. (1982), p. 97 (in Russian).

\bibitem{Nec1991} 
		A. A. Nechaev:  
		\textit{Kerdock code in a cyclic form}.  
		Diskr. Math. (U.S.S.R.), vol. 1 (1991), pp. 123-139. 
		English Translation in Discr. Math. Appl., vol. 1 (1991), 365--384.  

\bibitem{Nec2008} 
		A. A. Nechaev: 
		\textit{Finite rings}.  
		Handbook of Algebra, vol. 5 (2008), pp. 213--320.

\bibitem{Rai1999} 
		E. M. Rains: 
		\textit{Nonbinary quantum codes}.  
     IEEE Trans. Inf. Theory, vol. 45 (1999), pp. 1827--1832.
     
\bibitem{SW2004} 
		F. Shuqin, H. Wenbao: 
		\textit{Character sums over Galois rings and primitive polynomials over finite fields}.  
		Finite Fields and Their Appl., vol. 10 (2004), pp. 36-–52.

\bibitem{SM2016}
		K. Samei and S. Mahmoudi:
		\textit{Cyclic R-additive codes}. 
		Discr. Math. (2016) doi: 10.1016/j.disc.2016.11.007.

\bibitem{Wan2003} 
		Z.-X. Wan:  
		\textit{Lectures on finite fields and Galois rings}.  
		World Sci. Publ. Co. Inc., 2003.		

\bibitem{Woo2008} 
J. A. Wood:  
\textit{Lectures notes on the MacWilliams identities and the extension theorem}, for the CIMAT Int. School and Conf. on Coding Theory, Nov. 30 - Dec. 2, 2008 (Draft version of Nov. 25, 2008).
		
\end{thebibliography}
\end{document}